\newcommand{\defparproblem}[5]{
	\vspace{1mm}
	\noindent\fbox{
		\begin{minipage}{0.96\textwidth}
			\begin{tabular*}{\textwidth}{@{\extracolsep{\fill}}lr} \textsc{#1} & {\bf{Parameter:}} #3 \\ \end{tabular*}
			{\bf{Input:}} #2 \\
			{\bf{#5:}} #4
		\end{minipage}
	}
	\vspace{1mm}
}
\let\origqed\qed
\newcommand{\claimqed}{\hfill$\lrcorner$}
\newenvironment{claimproof}[1][\proofname]{\begin{proof}\renewcommand{\qed}{\claimqed}}{\end{proof}\renewcommand{\qed}{\origqed}}
\spnewtheorem{subclaim}{Claim}{\itshape}{\rmfamily}
\newcommand{\cc}{\mathrm{cc}}
\newcommand{\Oh}{\mathcal{O}}
\newcommand{\hh}{\mathcal{H}}
\newcommand{\tw}{\mathrm{\textbf{tw}}}
\newcommand{\hhtw}[1][\hh]{\tw_{#1}}
\newcommand{\hhdepth}[1][\hh]{\ed_{#1}}
\newcommand{\td}{\mathrm{\textbf{td}}}
\newcommand{\ed}{\mathrm{\textbf{ed}}}
\newcommand{\torso}{\textbf{T}}
\newcommand{\bip}{\mathrm{bip}}
\newcommand{\adj}{\textbf{adj}}
\newcommand{\inc}{\textbf{inc}}
\ifdefined\DEBUG{}
\newcommand{\bmpr}[1]{\rem{\textcolor{red}{\(\bullet \) #1}}}
\newcommand{\jjh}[1]{{\color{orange}{#1}}}
\newcommand{\bmpr}[1]{}
\newcommand{\jjh}[1]{#1}
\newcommand{\depr}[1]{}
\date{}
\title{FPT Algorithms to Compute the Elimination Distance to Bipartite Graphs and More\thanks{This project has received funding from the European Research Council (ERC) under the European Union's Horizon 2020 research and innovation programme (grant agreement No 803421, ReduceSearch).}}
\titlerunning{FPT Algorithms to Compute Elimination Distance}
\author{Bart M.P. Jansen \and Jari J.H. de Kroon}
\date{\today}
\institute{Eindhoven University of Technology}
\begin{document}

\maketitle

\begin{abstract}
For a hereditary graph class~$\hh$, the~$\hh$-elimination distance of a graph~$G$ is the minimum number of rounds needed to reduce~$G$ to a member of~$\hh$ by removing one vertex from each connected component in each round. The $\hh$-treewidth of a graph~$G$ is the minimum, taken over all vertex sets~$X$ for which each connected component of~$G - X$ belongs to~$\hh$, of the treewidth of the graph obtained from~$G$ by replacing the neighborhood of each component of~$G-X$ by a clique and then removing~$V(G) \setminus X$. These parameterizations recently attracted interest because they are simultaneously smaller than the graph-complexity measures treedepth and treewidth, respectively, and the vertex-deletion distance to~$\hh$. For the class~$\hh$ of bipartite graphs, we present non-uniform fixed-parameter tractable algorithms for testing whether the $\hh$-elimination distance or $\hh$-treewidth of a graph is at most~$k$. Along the way, we also provide such algorithms for all graph classes~$\hh$ defined by a finite set of forbidden induced subgraphs.
\newline \includegraphics[scale=0.1]{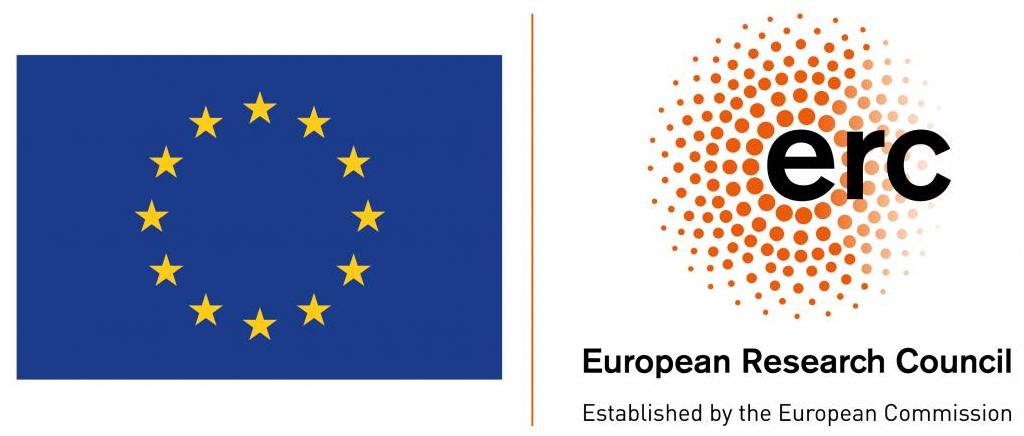}
\end{abstract}

\section{Introduction}

\paragraph{Background}
Assuming some structure on the input of a computational problem can greatly decrease its difficulty. For instance, it is well known that many NP-hard graph problems can be computed efficiently on graphs of bounded \emph{treewidth} using dynamic programming over so-called tree decompositions~\cite{BodlaenderK08}. 
The analysis of computational problems in terms of the input size and an additional parameter such as treewidth is the main objective in the field of parameterized complexity~\cite{DBLP:books/sp/CyganFKLMPPS15,DowneyF13}. 
A parameter similar to treewidth is \emph{treedepth}~\cite[\S 6.4]{NesetrilM12}.
It can be defined as the minimum number of rounds needed to get to the empty graph, where in each round we can delete one vertex from each connected component (formal definitions in the preliminaries). 
Some NP-hard graph problems become solvable in polynomial time if the input graph is restricted to be in a certain class. For instance the NP-hard \textsc{Vertex Cover} can be solved in polynomial time in chordal graphs; those graphs without induced cycles of length at least four.
A parameter that naturally follows from this observation is the minimum cardinality of a set of vertices whose deletion results in a graph contained in graph class $\hh$. Such a set is called an $\hh$-deletion set. This parameter essentially indicates how far the problem is from being a trivial case (cf.~\cite{GuoHN04}). 
The size of a feedback vertex set~\cite{KratschS10,JansenB13} or vertex cover number~\cite{FellowsLMRS08,FluschnikNSZ20} of the graph are often used examples of such parameters, where $\hh$ is the class of forests and edgeless graphs respectively. 

Recently there has been a push~\cite{DBLP:conf/mfcs/EibenGHK19,GanianOS19,DBLP:conf/stacs/GanianRS17} in obtaining parameterized algorithm where the parameter is a hybrid of some overall structure of the graph, like treewidth and treedepth, and some distance to triviality. One such example introduced by Bulian and Dawar is $\hh$-elimination distance ($\hhdepth$)~\cite{DBLP:journals/algorithmica/BulianD16,DBLP:journals/algorithmica/BulianD17}, which can be defined as the minimum number of deletion rounds needed to obtain a graph in $\hh$ by removing one vertex from each connected component in each round; recall that in the elimination-based definition of treedepth, the goal is to eliminate the entire graph. Hence~$\hhdepth$ is never larger than the treedepth or the (vertex-)deletion distance to~$\hh$. Bulian and Dawar showed that~$\hhdepth$ can be computed in FPT time when $\hh$ is minor-closed~\cite{DBLP:journals/algorithmica/BulianD17}.

A related hybrid variant of treewidth was introduced by Eiben et al.~\cite{DBLP:conf/mfcs/EibenGHK19}, namely $\hh$-treewidth ($\hhtw$). The $\hh$-treewidth of a graph can be defined as the minimum treewidth of the \emph{torso graph} of a vertex set whose removal ensures each component belongs to~$\hh$. This gives rise to tree decompositions in which each bag has size at most~$k+1$, apart for an arbitrarily large set of vertices that occurs in no other bags and induces a subgraph from~$\hh$. Similarly as before,~$\hhtw(G)$ is not larger than~$\tw(G)$ or the deletion distance from~$G$ to~$\hh$. For minor-closed graph classes~$\hh$ it can be shown that graphs of~$\hh$-treewidth at most~$k$ are minor-closed and therefore characterized by a finite set of forbidden minors. This leads to non-uniform algorithms to recognize graphs of $\hh$-treewidth at most~$k$ for minor-closed~$\hh$ using the Graph Minor algorithm~\cite{RobertsonS95b}. 

Apart from minor-closed families~$\hh$, some isolated results are known about FPT algorithms to compute~$\hhdepth$ and~$\hhtw$ exactly, parameterized by the parameter value. In recent work, Agrawal and Ramanujan~\cite{AgrawalR2020} give an FPT algorithm to compute the elimination distance to a cluster graph, as part of a kernelization result using the corresponding structural parameterization. Eiben et al.~\cite{DBLP:conf/mfcs/EibenGHK19} show that when~$\hh$ is the class of graphs of rankwidth at most~$c$ for some constant~$c$, then $\hhtw$ is FPT. Bulian and Dawar~\cite{DBLP:journals/algorithmica/BulianD16} considered the elimination distance to graphs of bounded degree~$d$ and gave an FPT approximation algorithm. Lindermayr et al.~\cite{DBLP:conf/mfcs/LindermayrSV20} showed that the elimination distance of a \emph{planar} graph to a bounded-degree graph can be computed in FPT time. Very recently, Agrawal et al.~\cite{AgrawalKPRS21} obtained non-uniform FPT algorithms for computing the elimination distance to any family~$\hh$ defined by a finite number of forbidden induced subgraphs, thereby settling the case of bounded-degree graphs as well.

\paragraph{Results and techniques}
We show that $\hhtw$ and $\hhdepth$ are non-uniformly fixed parameter tractable parameterized by the solution value when $\hh$ is the class of bipartite graphs. As a side-product of our proof, we show that~$\hhtw$ is non-uniformly FPT when~$\hh$ is defined by a finite number of forbidden induced subgraphs, generalizing the results of Agrawal et al.~\cite{AgrawalKPRS21} for~$\hhdepth$. The non-uniformity of our algorithms stems from the use of a meta-theorem by Lokshtanov et al.~\cite[Theorem 23]{DBLP:conf/icalp/LokshtanovRSZ18} which encapsulates the technique of \emph{recursive understanding}. This theorem essentially states that for any problem expressible in \emph{Counting Monadic Second Order} (CMSO) logic, the effort of classifying whether the problem is in FPT is reduced to inputs that are $(s,c)$-unbreakable (formally defined later). The theorem allows us to use the technique of \emph{recursive understanding} in a black box matter, leading to a streamlined proof at the expense of obtaining non-uniform algorithms. We believe that uniform algorithms can be obtained using the same approach by implementing the recursive understanding step from scratch and deriving an explicit bound on the sizes of representatives for the canonical congruence for~$\hhdepth$ and~$\hhtw$ on $t$-boundaried graphs. As the running times would not be practical in any case, we did not pursue this route.

Our proof is independent of that of Agrawal et al.~\cite{AgrawalKPRS21}, but is based on an older approach inspired by the earlier work of Ganian et al.~\cite{DBLP:conf/stacs/GanianRS17} that contains similar ideas. The key ingredient for our work is the insight that the approach based on recursive understanding used by Ganian et al.~\cite{DBLP:conf/stacs/GanianRS17} to compute a hybrid parameterization for instances of constraint satisfaction problems, can be applied more generally to aid in the computation of~$\hhdepth$ and~$\hhtw$. We can lift one of their main lemmas to a more general setting, where it roughly shows that given a~$(s(k),2k)$-unbreakable graph~$G$ (definitions in Section~\ref{sec:prelims}) and a deletion set~$X$ from~$G$ to~$\hh$ that is a \emph{subset} of some (unknown) structure that witnesses the value of~$\hhtw$ or~$\hhdepth$, we can determine in FPT time whether such a witness exists. This allows~$\hhdepth$ and~$\hhtw$ to be computed in FPT time if we can efficiently find a deletion set with the stated property. For families~$\hh$ defined by finitely many forbidden induced subgraphs, a simple bounded-depth branching algorithm suffices. Our main contribution is for bipartite graphs, where we show that the relation between odd cycle transversals and graph separators that lies at the heart of the iterative compression algorithm for OCT~\cite{DBLP:journals/orl/ReedSV04}, can be combined with the fact that there are only few minimal~$(u,v)$-separators of size at most~$2k$ in~$(s(k),2k)$-unbreakable graphs, to obtain an $\hh$-deletion set with the crucial property described above.


\paragraph{Related work} 

Hols et al.~\cite{HolsKP20} used parameterizations based on elimination distance to obtain kernelization algorithms for \textsc{Vertex Cover}.

In recent work~\cite{JansenKW21}, a superset of the authors gave FPT algorithms to \emph{approximate} $\hhdepth$ and $\hhtw$ for several classes~$\hh$, including bipartite graphs and all classes defined by a finite set of forbidden induced subgraphs. That work employed completely different techniques than used here, and left open the question whether the parameters can be computed exactly in FPT time.


\section{Preliminaries}\label{sec:prelims}
We consider simple undirected graphs without self-loops. The vertex and edge set of a graph $G$ are denoted by $V(G)$ and $E(G)$ respectively. When the graph is clear from context, we denote $|V(G)|$ by $n$ and $|E(G)|$ by $m$. For each $X \subseteq V(G)$, the graph induced by $X$ is denoted by $G[X]$. We denote $G[V(G) \setminus X]$ by $G-X$, and write $G-v$ instead of $G- \{v\}$. The open and closed neighborhoods of $v \in V(G)$ are denoted $N_G(v)$ and $N_G[v]$ respectively. For $X \subseteq V(G)$, $N_G[X] = \bigcup_{v \in X} N_G[v]$ and $N_G(X) = N_G[X] \setminus X$. The subscript $G$ is omitted if it is clear from context. The graph obtained from $G$ by contracting an edge $e = \{u,v\} \in E(G)$ is the graph obtained by deleting $u$ and $v$ and inserting a new vertex that is adjacent to all of $(N_G(u) \cup N_G(v)) \setminus \{u,v\}$. A graph $H$ is a \emph{minor} of $G$, if it can be obtained from a subgraph of $G$ by a number of edge contractions. A parameter is a function that assigns an integer to each graph. A parameter $f$ is minor-closed if $f(H) \leq f(G)$ for each minor $H$ of $G$. The connected components of $G$ are denoted by $\cc(G)$. A set $Y \subseteq V(G)$ is an $\hh$-deletion set if $G-Y \in \hh$. A graph class $\hh$ is hereditary if it is closed under vertex deletion, that is, if $G \in \hh$, then for every induced subgraph $F$ of $G$ it holds that $F \in \hh$. In this work we restrict ourselves to hereditary graph classes.
A proper $c$-coloring of a graph is a function $f \colon V(G) \to [c]$ such that for every $\{u,v\} \in E(G)$ it holds that $f(u) \neq f(v)$. A graph is bipartite if and only if it has a proper 2-coloring. For sets $X,Y \subseteq V(G)$, we say that $S \subseteq V(G)$ is an $(X,Y)$-separator if the graph $G-S$ does not contain a vertex $u \in X \setminus S$ and $v \in Y \setminus S$ in the same connected component.

A parameterized problem $\Pi$ is a subset $\Sigma^* \times \mathbb{N}$ for some finite alphabet $\Sigma$. A parameterized problem is \emph{non-uniformly fixed-parameter tractable} (FPT) if there exists a fixed $d$ such that for every fixed $k \in \mathbb{N}$, there exists an algorithm that determines whether $(x,k) \in \Pi$ in $\Oh(|x|^d)$ time. (Hence there is a different algorithm for each value of~$k$.)

\subsection{Treewidth}

\begin{definition}
A \emph{tree decomposition} of a graph $G$ is a pair $(T,\{X_t\}_{t \in V(T)})$, where $T$ is a tree and each $t \in V(T)$ is assigned a vertex subset $X_t \subseteq V(G)$, such that the following holds:
\begin{enumerate}
    \item For every $\{u,v\} \in E(G)$, there exists $t \in V(T)$ with $\{u,v\} \subseteq X_t$.
    \item $\bigcup_{t \in V(T)} X_t = V(G)$.
    \item For every $v \in V(G)$, the set $T_u = \{t \in V(T) \mid u \in X_t\}$ induces a connected subtree of $T$.
\end{enumerate}
The \emph{width} of tree decomposition $(T,\{X_t\}_{t \in V(T)})$ equals $\max_{t \in V(T)} |X_t|-1$. The \emph{treewidth} of a graph, denoted $\tw(G)$, is the minimum possible width over all possible tree decompositions of $G$.
\end{definition}

\subsection{$\hh$-treewidth and $\hh$-elimination distance}

\begin{definition}\cite[Definition 4]{DBLP:conf/stacs/GanianRS17}
Let $G$ be a graph and $X \subseteq V(G)$. The \emph{torso of $X$}, denoted by $\torso_G(X)$, is the graph obtained by turning the neighborhood of every connected component of $G-X$ into a clique, followed by deleting all of $V(G) \setminus X$. 
\end{definition}

Eiben et al.~\cite{DBLP:conf/mfcs/EibenGHK19} use the term of \emph{collapsing} $V(G) \setminus X$ instead of the torso of $X$. Since our algorithms try to identify $X$, the torso terminology is more natural.

\begin{definition}\cite[Definition 3]{DBLP:conf/mfcs/EibenGHK19}
The $\hh$-treewidth of a graph $G$ is the smallest integer $k$ such that there exists a set $X \subseteq V(G)$ with $\tw(\torso_G(X)) \leq k$ and for each connected component $C \in \cc(G-X)$ we have $C \in \hh$. We call $X$ an $\hhtw$ witness of width $k$.
\end{definition}

\begin{definition}\cite{DBLP:journals/algorithmica/BulianD16,DBLP:journals/algorithmica/BulianD17} \label{def:hhdepth}
The $\hh$-elimination distance of $G$ for a hereditary graph class $\hh$, denoted by $\hhdepth(G)$, is defined as:
\begin{align*}
    \hhdepth(G) = 
    \begin{cases}
    \max_{C \in \cc(G)} \hhdepth(C) & \text{ if $G$ not connected} \\
    0 & \text{ if $G$ connected and $G \in \hh$}\\
    1 + \min_{v \in V(G)} \hhdepth(G-v) & \text{ otherwise}
    \end{cases}
\end{align*}
The treedepth of a graph, denoted $\td(G)$, is equivalent to $\hhdepth(G)$ where $\hh$ only contains the empty graph.
\end{definition}

Note that the definition above is well defined when $\hh$ is hereditary, since each hereditary graph class contains the empty graph. We argue that $\hh$-elimination distance has an equivalent definition similar to that of $\hh$-treewidth.

\begin{proposition}\label{prop:H-ed_to_torso-td}
A graph has $\hhdepth(G) \leq k$ if and only if there exists~$X \subseteq V(G)$ such that $\td(\torso_G(X)) \leq k$ and $C \in \hh$ for each $C \in \cc(G-X)$.
\end{proposition}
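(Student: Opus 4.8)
The plan is to prove both directions by induction on~$|V(G)|$, exploiting the recursive structure of both $\hhdepth$ (Definition~\ref{def:hhdepth}) and the elimination-based definition of treedepth. Recall that $\td$ has the analogous recursive form: for a connected graph~$H\notin\{\text{empty}\}$, $\td(H)=1+\min_{v\in V(H)}\td(H-v)$, and $\td$ of a disconnected graph is the maximum of $\td$ over its components, with $\td(\emptyset)=0$. Throughout, for a witness set~$X$ we use the convention that $\torso_G(X)=G[X]$ after adding the clique-edges on neighborhoods of components of~$G-X$; the key structural fact I will lean on is that $\torso_G(X)$ behaves well under vertex deletion: for $v\in X$, $\torso_G(X)-v$ is a subgraph of (and in fact contains all necessary edges of) $\torso_{G-v}(X\setminus\{v\})$, and more importantly that the connected components of~$\torso_G(X)$ correspond exactly to the ``$X$-parts'' of the connected components of~$G$, because a path in~$G$ between two vertices of~$X$ either stays in~$X$ or detours through a component~$C$ of~$G-X$ whose neighborhood is made into a clique in the torso.

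For the forward direction ($\hhdepth(G)\le k \Rightarrow \exists X$), I would follow the recursion defining $\hhdepth$. If $G$ is disconnected, apply induction to each component~$C_i$ to get witnesses~$X_i$ with $\td(\torso_{C_i}(X_i))\le k$ and all components of~$C_i-X_i$ in~$\hh$; then $X=\bigcup_i X_i$ works since $\torso_G(X)$ is the disjoint union of the $\torso_{C_i}(X_i)$ and $\td$ of a disjoint union is the max. If $G$ is connected and $G\in\hh$, take $X=\emptyset$: $\torso_G(\emptyset)$ is the empty graph and $G-\emptyset=G\in\hh$. Otherwise $\hhdepth(G)=1+\hhdepth(G-v)$ for some~$v$; apply induction to~$G-v$ to obtain~$X'$ with $\td(\torso_{G-v}(X'))\le k-1$ and all components of $(G-v)-X'$ in~$\hh$, and set $X=X'\cup\{v\}$. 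The components of~$G-X$ are exactly the components of $(G-v)-X'$, hence all in~$\hh$. It remains to check $\td(\torso_G(X))\le k$: I claim $\torso_G(X)-v$ has the same connected components, restricted to~$V(G)\setminus\{v\}$... wait, cleaner is to show $\torso_G(X)-v$ is a \emph{spanning subgraph} of $\torso_{G-v}(X')$ with possibly fewer edges — but since adding edges to the torso can only be caused by components that $v$ was attached to, and removing $v$ only splits/merges components of $G-X$ consistently, one verifies $E(\torso_G(X)-v)\subseteq E(\torso_{G-v}(X'))$, so $\td(\torso_G(X)-v)\le\td(\torso_{G-v}(X'))\le k-1$, giving $\td(\torso_G(X))\le 1+\td(\torso_G(X)-v)\le k$ via the elimination recursion applied to each component of $\torso_G(X)$ through~$v$.

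For the reverse direction ($\exists X \Rightarrow \hhdepth(G)\le k$), again induct on~$|V(G)|$. If $G$ is disconnected, each component~$C$ of~$G$ inherits the witness~$X\cap V(C)$ with torso $\torso_C(X\cap V(C))$ being the corresponding component-union piece of~$\torso_G(X)$, so $\td$ of that piece is $\le k$; apply induction to each~$C$ and take the max. If $G$ is connected: should $X=\emptyset$ (or $\torso_G(X)$ empty), then $G=G-X\in\hh$ and $\hhdepth(G)=0\le k$. Otherwise $\torso_G(X)$ is a nonempty graph with $\td\le k$, so it has a vertex~$v$ (in one of its components) with $\td(\torso_G(X)-v)\le k-1$; but crucially, since $\torso_G(X)-v$ and $\torso_{G-v}(X\setminus\{v\})$ have the same components up to the edge-containment above, we get $\td(\torso_{G-v}(X\setminus\{v\}))\le\td(\torso_G(X)-v)\le k-1$, and $X\setminus\{v\}$ is still a valid witness for~$G-v$ (components of $(G-v)-(X\setminus\{v\})$ are components of $G-X$, all in~$\hh$). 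By induction $\hhdepth(G-v)\le k-1$, so $\hhdepth(G)\le 1+\hhdepth(G-v)\le k$.

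The main obstacle — and the only place requiring genuine care rather than bookkeeping — is the precise comparison between $\torso_G(X)-v$ and $\torso_{G-v}(X\setminus\{v\})$ when $v\in X$: removing~$v$ from~$G$ can merge a component of~$G-X$ that was adjacent to~$v$ with~$v$'s ``slot'', but since $v$ is being deleted anyway this does not create components of $G-X$ outside~$\hh$; and it can change which neighborhoods get cliqued, but one checks that every clique-edge present in $\torso_G(X)-v$ on a set $N_G(C)\setminus\{v\}$ is also present in $\torso_{G-v}(X\setminus\{v\})$ because $C$ (or a component containing~$C$) still appears in $(G-v)-(X\setminus\{v\})$ with the same or larger neighborhood. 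Hence $E(\torso_G(X)-v)\subseteq E(\torso_{G-v}(X\setminus\{v\}))$ and the two graphs have the same vertex set, which is exactly what both inductions need. I would isolate this containment as a small sub-claim and prove it by tracing a path certifying each torso edge.
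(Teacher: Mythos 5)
Your proposal follows essentially the same route as the paper: both directions are handled by induction through the recursive definitions of $\hhdepth$ and $\td$, with the crux being the comparison of $\torso_G(X)-v$ and $\torso_{G-v}(X\setminus\{v\})$ for $v\in X$. One point needs correcting, however. You isolate the containment $E(\torso_G(X)-v)\subseteq E(\torso_{G-v}(X\setminus\{v\}))$ and assert it is ``exactly what both inductions need,'' but by monotonicity of treedepth under subgraphs this containment only yields $\td(\torso_G(X)-v)\le\td(\torso_{G-v}(X\setminus\{v\}))$, which serves your forward direction; your reverse direction needs the opposite inequality $\td(\torso_{G-v}(X\setminus\{v\}))\le\td(\torso_G(X)-v)$, and hence the opposite containment. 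The clean resolution, which the paper states in one sentence, is that the two graphs are in fact \emph{equal}: since $v\in X$, the graph $G-X$ coincides with $(G-v)-(X\setminus\{v\})$, so the components of the two deletions are literally the same sets, and for each such component $C$ we have $N_{G-v}(C)=N_G(C)\setminus\{v\}$; thus both torsos have vertex set $X\setminus\{v\}$, the same induced edges, and the same cliques. Your worry about components ``merging with $v$'s slot'' does not arise, because $v$ is removed from $X$, not added to the $\hh$-part. With the sub-claim upgraded from one-sided containment to equality, the rest of your argument goes through.
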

\begin{proof}
For the first direction, we prove by induction on~$k$ that if~$G$ has a set~$X \subseteq V(G)$ such that~$G-X \in \hh$ and~$\td(\torso_G(X))\leq k$, then~$\hhdepth(G) \leq k$.

For the base case~$k=0$, note that~$\td(\torso_G(X)) = 0$ implies that~$X = \emptyset$, so that~$G \in \hh$. By Definition~\ref{def:hhdepth} we have~$\hhdepth(G) = 0 \leq k$.

For the induction step we have~$k>0$. To show that~$\hhdepth(G) \leq k$, by Definition~\ref{def:hhdepth} it suffices to prove that each~$C \in \cc(G)$ satisfies~$\hhdepth(C) \leq k$. Let~$X_C := C \cap X$. If~$X_C = \emptyset$ then~$C \in \hh$ (since~$\hh$ is hereditary) so~$\hhdepth(C) = 0 \leq k$. In the remainder assume that~$X_C \neq \emptyset$. Observe that~$\torso_G(X)$ has~$\torso_C(X_C)$ as a connected component, and that~$\torso_C(X_C)$ is connected since~$C$ is connected and~$X_C \subseteq C$. 
By definition of~$\td$ there exists a vertex~$x \in X_C$ such that~$\td(\torso_C(X_C) - x) = \td(\torso_C(X_C)) - 1$. Let~$X_C' := X_C \setminus \{x\}$ and let~$C' := C - x$. 
Note that~$\torso_{C'}(X_C') = \torso_C(X_C) - x$: it makes no difference whether we first turn the neighborhood of each component of~$C-X_C$ into a clique, remove~$C \setminus X_C$, and then remove~$x$, or whether we start from~$C' = C - x$, turn the neighborhood of each component of~$C' - X_C'$ into a clique, and then remove~$C' \setminus X_C'$. 
By induction on~$C'$ and~$X_C'$ with~$k' := \td(\torso_{C'}(X_C')) < k$, it follows that~$\hhdepth(C') \leq \td(\torso_{C'}(X_C')) = \td(\torso_C(X_C)) - 1 \leq \td(\torso_G(X)) - 1$, where the last inequality follows since~$\torso_C(X_C)$ is a connected component of~$\torso_G(X)$. By Definition~\ref{def:hhdepth}, since~$C$ is connected we  have~$\hhdepth(C) \leq 1 + \min_{v \in C} \hhdepth(C - v) \leq 1 + \hhdepth(C - x) \leq 1 + (\td(\torso_G(X)) - 1) = \td(\torso_G(X)) \leq k$, which completes this direction of the proof.

For the converse direction, we prove that if~$\hhdepth(G) \leq k$ then~$G$ has a vertex set~$X$ such that~$G - X \in \hh$ and~$\td(\torso_G(X)) \leq k$. We use an induction on~$k + |V(G)|$. If~$\hhdepth(G) = 0$ then by Definition~\ref{def:hhdepth} we have~$G \in \hh$ so that~$X = \emptyset$ suffices. For the induction step we have~$\hhdepth(G) > 0$. We distinguish two cases, depending on the connectivity of~$G$.

If~$G$ is connected, then since~$\hhdepth(G) > 0$ we have~$G \notin \hh$. Hence by Definition~\ref{def:hhdepth} we have~$\hhdepth(G) = 1 + \min_{v \in V(G)} \hhdepth(G - v)$. Let~$x$ be a vertex for which equality is attained. Since~$\hhdepth(G - x) = \hhdepth(G) - 1 < k$, by induction on~$G' := G - x$ there exists a set~$X' \subseteq V(G')$ such that~$\td(\torso_{G'}(X')) \leq k-1$. Define~$X := X' \cup \{x\}$ and note that~$\td(\torso_G(X)) \leq 1 + \td(\torso_{G'}(X')) \leq 1 + (k-1)$ since the graph~$\torso_{G'}(X')$ can be obtained from~$\torso_G(X)$ by removing the vertex~$x$. Hence~$\td(\torso_G(X)) \leq k$, proving the claim.

Now suppose that~$G$ is disconnected, so that~$\hhdepth(G) = \max_{C \in \cc(G)} \hhdepth(C)$. For each~$C \in \cc(G)$ we have that~$|V(C)| < |V(G)|$ and~$\hhdepth(C) \leq \hhdepth(G) = k$, so we may apply the induction hypothesis to~$C$ to obtain a set~$X_C \subseteq V(C)$ such that~$\td(\torso_C(X_C)) \leq \hhdepth(C) \leq k$. Let~$X := \bigcup_{C \in \cc(G)} X_C$. Observe that each connected component of the graph~$\torso_G(X)$ is equal to~$\torso_C(X_C)$ for some~$C \in \cc(G)$, so that each connected component~$H$ of~$\torso_G(X)$ satisfies~$\td(H) \leq \td(\torso_C(X_C)) \leq k$ for some~$C \in \cc(G)$. By Definition~\ref{def:hhdepth}, the fact that each component of~$\torso_G(X)$ has treedepth at most~$k$ ensures~$\td(\torso_G(X)) \leq k$, which concludes the proof.
\end{proof}

Similar to $\hhtw$ witnesses, we call $X$ an $\hhdepth$ witness of depth $k$. Since the torso operation on $X$ turns the neighborhood of each connected component of $G-X$ into a clique, the following note follows.

\begin{note}\label{note:smallneighborhood}
If $X$ is a $\hhtw$ witness of width $k-1$ (respectively $\hhdepth$ witness of depth $k$), then $|N(C)| \leq k$ for every $C \in \cc(G-X)$.
\end{note}

We are ready to introduce the main problem we try to solve.

\defparproblem{$\hh$-treewidth ($\hhtw$) / $\hh$-elimination distance ($\hhdepth$)}{A graph $G$, an integer $k$.}{$k$}{Decide whether $\hhtw(G) \leq k-1$ / $\hhdepth(G) \leq k$.}{Question}

\begin{definition}\cite{DBLP:conf/icalp/LokshtanovRSZ18}
Let $G$ be a graph and $s, c \in \mathbb{N}$. A partition $(X,C,Y)$ of $V(G)$ is an $(s,c)$-separation in $G$ if:
\begin{itemize}
    \item $C$ is a separator, that is, no edge has one endpoint in $X$ and one in $Y$,
    \item $|C| \leq c$, $|X| \geq s$, and $|Y| \geq s$.
\end{itemize}
A graph $G$ is $(s,c)$-unbreakable if there is no $(s,c)$-separation in $G$.
\end{definition}

The following proposition is similar to Lemma 21 of Ganian et al.~\cite{DBLP:conf/stacs/GanianRS17}.
\begin{proposition}\label{prop:nice}
Let $G$ be an $(s,c)$-unbreakable graph for $s,c \in \mathbb{N}$ and $\hh$ be a graph class such that $\hhtw(G) \leq k-1$ (resp. $\hhdepth(G) \leq k$) and $c \geq k$. Then at least one of the following holds:
\begin{enumerate}
    \item\label{item:twbounded} $\tw(G) \leq s + k - 1$ (resp. $\td(G) \leq s + k - 1$),
    \item\label{item:onelargecomponent} each $\hhtw$ (resp. $\hhdepth$) witness $X$ of $G$ satisfies the following:
    \begin{itemize}
        \item $G-X$ has exactly one connected component $C$ of size at least $s$, and
        \item $|V(G) \setminus N[C]| < s$ and $|X| \leq s+k-1$
    \end{itemize}
\end{enumerate}
\end{proposition}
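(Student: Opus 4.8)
The plan is to fix an arbitrary $\hhtw$ witness $X$ of $G$ of width $k-1$ (resp.\ an $\hhdepth$ witness of depth $k$), which exists by hypothesis, and to analyse the connected components of $G-X$ using the unbreakability of $G$, distinguishing cases according to how many of these components are \emph{large}, i.e.\ of size at least $s$.

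The first step is to show that for \emph{every} witness $X$, the graph $G-X$ has at most one large component. Suppose $C_1 \neq C_2$ are both large. By Note~\ref{note:smallneighborhood} we have $N(C_1) \subseteq X$ and $|N(C_1)| \leq k \leq c$, so the partition $(V(C_1),\, N(C_1),\, V(G) \setminus N[C_1])$ is an $(s,c)$-separation of $G$: the middle part separates $V(C_1)$ from everything else, $|V(C_1)| \geq s$, and $V(G) \setminus N[C_1] \supseteq V(C_2)$, so it has size at least $s$ as well. This contradicts the unbreakability of $G$.

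The second step treats the case that some witness $X$ has no large component, and derives item~\ref{item:twbounded}. Starting from an optimal tree decomposition of $\torso_G(X)$ of width at most $k-1$ (resp.\ an optimal elimination forest of depth at most $k$), I extend it to $G$ as follows: for each component $C$ of $G-X$ the set $N(C)$ is a clique of $\torso_G(X)$, hence is contained in some bag (resp.\ lies on a single root-to-leaf path); attach a new bag $N(C) \cup V(C)$ below such a bag (resp.\ attach an optimal elimination forest of $C$ below the vertex of $N(C)$ that is a descendant of all the others, and simply add it as a new tree if $N(C) = \emptyset$). A routine check of the tree-decomposition axioms (resp.\ that every edge of $G$ joins an ancestor to a descendant) shows this is valid, and since $|N(C)| \leq k$ and $|V(C)| \leq s-1$ the width is at most $s+k-2$ while the depth is at most $k + \max_C \td(C) \leq k+(s-1)$. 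Either way the relevant parameter of $G$ is at most $s+k-1$, giving item~\ref{item:twbounded}.

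The third step treats the remaining case, in which every witness $X$ has exactly one large component $C$, and verifies item~\ref{item:onelargecomponent} for each such $X$. Uniqueness of $C$ is the case assumption. Next, $|V(G) \setminus N[C]| < s$, because otherwise $(V(C),\, N(C),\, V(G)\setminus N[C])$ would be an $(s,c)$-separation exactly as in the first step. Finally $X \subseteq N(C) \cup (V(G)\setminus N[C])$ because $X$ is disjoint from the component $C$ of $G-X$, so $|X| \leq |N(C)| + |V(G)\setminus N[C]| \leq k + (s-1) = s+k-1$. Combining the three steps: if some witness has no large component, item~\ref{item:twbounded} holds; otherwise, by the first step no witness has two large components, so every witness has exactly one and item~\ref{item:onelargecomponent} holds for all of them. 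The only genuinely technical point is the decomposition/elimination-forest extension in the second step; the rest is a direct application of the definition of $(s,c)$-unbreakability together with Note~\ref{note:smallneighborhood}.
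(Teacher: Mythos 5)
Your proof is correct and follows essentially the same route as the paper: bound $|N(C)|$ via Note~\ref{note:smallneighborhood}, use $(s,c)$-unbreakability to force $|V(G)\setminus N[C]| < s$, and deduce uniqueness of the large component and $|X| \leq s+k-1$. The only difference is that you spell out the construction behind item~\ref{item:twbounded} (extending a decomposition/elimination forest of $\torso_G(X)$ by the small components), which the paper merely asserts.
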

\begin{proof}
Consider an arbitrary witness $X$. If all connected components of $G-X$ have size at most $s-1$, then~\ref{item:twbounded} holds. Otherwise, let $C$ be some component of $G-X$ of size at least $s$. First observe that $N(C) \subseteq X$ and $|N(C)| \leq k$ by Note~\ref{note:smallneighborhood}. If $|V(G) \setminus N[C]| \geq s$, then $(C,N(C),V(G) \setminus N[C])$ is an $(s,c)$-separation as $k \leq c$. Since $G$ is $(s,c)$-unbreakable we must have that $|V(G) \setminus N[C]| < s$. Since for any connected component $C'$ of $G-X$ besides $C$ it holds that $V(C') \subseteq V(G) \setminus N[C]$, we get $|V(C')| < s$ too. Finally note that $X \subseteq V(G) \setminus N(C)$ and hence $|X| \leq s + k - 1$ for any witness $X$ and hence~\ref{item:onelargecomponent} holds.
\end{proof}




The following lemma bounds the number of small connected vertex sets with a small neighborhood. It was originally stated for connected sets of exactly~$b$ vertices with an open neighborhood of exactly~$f$ vertices.

\begin{lemma}\cite[cf.~Lemma 3.1]{DBLP:journals/combinatorica/FominV12}\label{lem:enumerate_small_connected_graphs}
Let $G$ be a graph. For every $v \in V(G)$ and $b,f \geq 0$, the number of connected vertex sets $B \subseteq V(G)$ such that (a) $v \in B$, (b) $|B| \leq b+1$, and (c) $|N(B)| \leq f$ is at most $b \cdot f \cdot \binom{b+f}{b}$. Furthermore they can be enumerated in $\Oh(n \cdot b^2 \cdot f \cdot (b+f) \cdot \binom{b+f}{b})$ time using polynomial space.
\end{lemma}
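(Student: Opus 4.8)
The plan is to prove the bound and the enumeration algorithm at once, by a branching procedure that grows a candidate set one vertex at a time. I would maintain a pair $(A,R)$ of disjoint vertex sets with $v\in A$ and $G[A]$ connected, where $A$ collects the vertices \emph{committed} to the set under construction and $R$ collects vertices committed to lie in its open neighbourhood (and hence outside it). Starting from $A=\{v\}$ and $R=\emptyset$, at a recursion node I would: abandon the branch if $|A|>b+1$ or $|R|>f$; output $A$ if $N(A)\subseteq R$; and otherwise pick the vertex $u$ of smallest index in the \emph{frontier} $N(A)\setminus R$ and recurse on the two children $(A\cup\{u\},R)$ (a \emph{grow} step) and $(A,R\cup\{u\})$ (a \emph{block} step). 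Since $u\in N(A)$, a grow step preserves connectivity of $G[A]$, and since a blocked vertex is only ever drawn from the current frontier and never later added to $A$, every vertex of $R$ stays in the neighbourhood of the final $A$; hence each output set is a connected set containing $v$ with at most $b+1$ vertices and at most $f$ outside neighbours, and in fact $N(A)=R$ at every output leaf.

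The correctness argument is an exact correspondence between the sets counted by the lemma and the output leaves of this recursion. Given such a set $B$, run the recursion making the choice forced by $B$ at each frontier vertex $u$: take the grow step if $u\in B$, and otherwise the block step (which is legitimate, since $A\subseteq B$ implies $u\in N(A)\subseteq N(B)$, so $u\in N(B)\setminus B$). A straightforward induction shows that the invariants $A\subseteq B$ and $R\subseteq N(B)$ are preserved, so the size tests never trigger and the run ends at an output leaf with $N(A)\subseteq R\subseteq N(B)$; if some vertex of $B\setminus A$ were adjacent to $A$ it would lie in $N(A)\cap B\subseteq N(B)\cap B=\emptyset$, so by connectivity of $B$ we get $A=B$. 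Conversely, every output leaf is a valid set, and distinct valid sets force distinct root-to-leaf paths and hence distinct leaves, so the number of sets counted by the lemma equals the number of output leaves.

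For the counting, observe that every root-to-leaf path is a sequence of grow and block steps, and an output leaf with $|A|=a+1$ and $|R|=r$ (where $0\le a\le b$ and $0\le r\le f$) is reached by a path with exactly $a$ grow and $r$ block steps; as the recursion is deterministic once its sequence of binary choices is fixed, there are at most $\binom{a+r}{a}$ such leaves, hence at most $\sum_{a=0}^{b}\sum_{r=0}^{f}\binom{a+r}{a}$ output leaves overall, which standard binomial estimates bound by $b\cdot f\cdot\binom{b+f}{b}$ (see~\cite{DBLP:journals/combinatorica/FominV12}). For the running time, the recursion tree is binary and every internal node lies on a path with at most $b+1$ grow and at most $f$ block steps, so the number of nodes — including abandoned branches — is within a constant factor of the number of output leaves; each node is processed in time polynomial in $n$, $b$ and $f$ (maintaining the frontier incrementally, selecting its minimum, and forming the two children), and a direct implementation yields the stated $\Oh(n\cdot b^2\cdot f\cdot(b+f)\cdot\binom{b+f}{b})$ time, while the recursion stack has depth $\Oh(b+f)$ and stores only $\Oh(b+f)$-sized sets, so the space is polynomial.

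The hard part will be getting the bookkeeping of the correspondence exactly right: the frontier vertex must be selected by a fixed canonical rule so that the branch dictated by a target set is uniquely determined (this is what makes the map injective), and one must argue that the stopping rule $N(A)\subseteq R$ fires precisely when $A$ equals the target set — not prematurely, which the connectivity argument rules out, and not skipped — all while checking that the branches cut by the size tests cannot inflate the recursion tree beyond a constant factor of the number of outputs, which is exactly what keeps the running time within the claimed bound. The concluding binomial-coefficient estimate is routine.
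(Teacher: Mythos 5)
The paper offers no proof of this lemma --- it is imported verbatim (``cf.'') from Fomin and Villanger \cite{DBLP:journals/combinatorica/FominV12} --- and your grow/block branching on a canonically chosen frontier vertex, with the injective correspondence between valid sets and output leaves, is exactly the standard proof of that Lemma~3.1, so your argument is correct and matches the source. The only blemish is the final estimate: your leaf count is $\sum_{a=0}^{b}\sum_{r=0}^{f}\binom{a+r}{a}\leq (b+1)(f+1)\binom{b+f}{b}$, which is not literally at most $b\cdot f\cdot\binom{b+f}{b}$ (e.g.\ $b=f=1$), but this is immaterial since the paper's stated constant is itself a loose adaptation (it already fails for $b=0$ or $f=0$) and only the FPT-size bound is ever used.
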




\subsection{CMSO}
Our exposition of CMSO roughly follows that of Lokshtanov et al.~\cite{DBLP:conf/icalp/LokshtanovRSZ18}. Monadic second order logic (MSO) is a logic that can be used to express properties of graphs. The syntax includes logical connectives such as $\vee$, $\wedge$, $\neg$, $\Leftrightarrow$, $\Rightarrow$, and variables for single vertices, single edges, sets of vertices, and sets of edges, which can be quantified using $\forall$ and $\exists$. Furthermore there are binary relations for set membership ($\in)$, equality of variables ($=$), testing whether edge $e$ incident to vertex $v$ ($\inc(v,e)$), and finally testing whether two vertices are adjacent ($\adj(u,v)$).
Counting monadic second order logic (CMSO) is an extension of MSO that includes a cardinality test $\textbf{card}_{q,r}(S)$, which is true if and only if $|S| \equiv q \mod r$.
For a more complete introduction to CMSO we refer to the book of Courcelle and Engelfriet~\cite{CourcelleEngelfriet2012}.

Let $\hh$ be a graph class. We say that containment in $\hh$ is expressible in CMSO if there exists a CMSO formula $\varphi_\hh$ such that for any graph $G$ it holds that $G \models \varphi_\hh$ if and only if $G \in \hh$.

\begin{lemma}\label{lem:cmsoformulas}
There exist CMSO-formulas with the following properties:
\begin{enumerate}
    \item\label{item:minor} For any graph $H$, there exists a formula~$\varphi_{\mathrm{H-MINOR}}(X)$ such that for any graph $G$ and any $X \subseteq V(G)$ it holds that $(G,X) \models \varphi_{\mathrm{H-MINOR}}(X)$ if and only if $H$ is a minor of $G[X]$.
    \item\label{item:forbiddensubgraph} For any graph class $\hh$ characterized by a finite set of forbidden induced subgraphs, there exists a formula~$\varphi_{\hh}$ such that for any graph~$G$ it holds that~$G \models \varphi_{\hh}$ if and only if graph~$G \in \hh$.
    \item\label{item:bipartite} There exists a formula~$\varphi_{BIP}$ such that for any graph~$G$ it holds that~$G \models \varphi_{BIP}$ if and only if graph~$G$ is bipartite.
    \item\label{item:final} For each~$k \in \mathbb{N}$, for each graph class $\hh$ such that containment in $\hh$ is CMSO expressible, and for each minor-closed parameter $f$, there exists a formula~$\varphi_{(k,\hh,f)}(X)$ such that for any graph~$G$ and any~$X \subseteq V(G)$ we have~$(G,X) \models \varphi_{(k,\hh,f)}(X)$ if and only if~$f(\torso_G(X)) \leq k$ and $C \in \hh$ for each $C \in \cc(G-X)$.
\end{enumerate}
\end{lemma}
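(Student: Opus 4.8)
The plan is to prove each of the four items in turn, using the earlier items as building blocks where convenient. Throughout, the key principle is that CMSO is closed under the usual logical operations, under quantification over vertex/edge sets, and — crucially — that one can simulate graph modifications (deleting a vertex set, adding edges within a specified set, contracting, etc.) inside a formula by quantifying over the relevant sets and relativizing all further quantifiers and atomic predicates to them.

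\medskip

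\noindent\textbf{Items \ref{item:minor}, \ref{item:forbiddensubgraph}, \ref{item:bipartite}.} For item~\ref{item:minor}, I would write out the standard CMSO encoding of the minor relation: $H$ is a minor of $G[X]$ iff there exist pairwise-disjoint nonempty vertex sets $\{B_h\}_{h \in V(H)}$, all contained in $X$, such that each $G[B_h]$ is connected and for every edge $\{h,h'\} \in E(H)$ there is an edge of $G$ between $B_h$ and $B_{h'}$. Since $H$ is fixed and finite, this is a finite block of existential set quantifiers followed by a fixed conjunction; connectivity of $G[B_h]$ is itself CMSO-expressible (there is no partition of $B_h$ into two nonempty sets with no edge between them), and membership/adjacency/incidence are atomic. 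For item~\ref{item:forbiddensubgraph}, if $\hh$ is characterized by forbidden induced subgraphs $F_1,\dots,F_\ell$, then $G \in \hh$ iff for each $i$ there is no injection of $V(F_i)$ into $V(G)$ preserving both adjacency and non-adjacency; again a finite conjunction of first-order sentences. For item~\ref{item:bipartite}, $G$ is bipartite iff there exists $S \subseteq V(G)$ with no edge having both endpoints in $S$ and no edge having both endpoints in $V(G) \setminus S$ — a single existential set quantifier. (Alternatively one could invoke item~\ref{item:forbiddensubgraph}-style reasoning via odd cycles, but the $2$-coloring formulation is cleaner and does not need the cardinality predicate.)

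\medskip

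\noindent\textbf{Item \ref{item:final}.} This is the main point, and the obstacle is expressing ``$f(\torso_G(X)) \le k$'' since $\torso_G(X)$ is not an induced subgraph of $G$ — its edge set depends on the component structure of $G - X$. The strategy is to avoid materializing $\torso_G(X)$ and instead describe a witnessing tree-decomposition-like or elimination-like structure directly on $G$, using $X$ together with the component partition of $G-X$. Concretely, two vertices $u, v \in X$ are adjacent in $\torso_G(X)$ iff either $\{u,v\} \in E(G)$, or there is a connected component $C$ of $G - X$ with $u, v \in N_G(C)$; and the latter is CMSO-expressible by quantifying over a connected set $C \subseteq V(G) \setminus X$ with $u, v \in N_G(C)$. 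So I get a CMSO-definable ``virtual adjacency'' predicate $\adj^*(u,v)$ on $X$. Now I use the fact that $f$ is minor-closed: a standard fact (Courcelle; or via item~\ref{item:minor} applied to the torso) is that for a minor-closed parameter, ``$f(H) \le k$'' is expressible, for instance because bounded-$f$ graphs are minor-closed and hence have finitely many forbidden minors $M_1,\dots,M_p$ — so $f(\torso_G(X)) \le k$ iff no $M_j$ is a minor of $\torso_G(X)$, and ``$M_j$ is a minor of $\torso_G(X)$'' is the minor formula of item~\ref{item:minor} with every adjacency test replaced by $\adj^*$ and every quantifier relativized to $X$. (One must check that the finitely-many-forbidden-minors characterization holds for each fixed $k$; this is exactly the non-uniformity source flagged in the introduction, but it is fine since $k$ is fixed and we only need existence of the formula, not a construction of it.) Finally, ``$C \in \hh$ for each $C \in \cc(G - X)$'' is obtained by relativizing the CMSO formula $\varphi_\hh$ (assumed to exist) to each component: ``for every connected component $C$ of $G - X$, $\varphi_\hh$ holds in $G[C]$'', where quantifying over a component is again just quantifying over a maximal connected subset of $V(G) \setminus X$, and relativization replaces each quantifier $\exists Z$ by $\exists Z (Z \subseteq C \wedge \cdots)$ and leaves the atomic predicates unchanged since $G[C]$ is induced. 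Conjoining these two parts (both with $X$ as a free set variable) gives $\varphi_{(k,\hh,f)}(X)$, and $(G,X) \models \varphi_{(k,\hh,f)}(X)$ iff both conditions hold, as required.

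\medskip

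\noindent The main obstacle, as noted, is item~\ref{item:final}: specifically, the combination of (i) correctly simulating the torso edges inside the logic and (ii) turning ``minor-closed parameter bounded by $k$'' into an actual CMSO sentence. Part (i) is a routine but slightly fiddly relativization argument; part (ii) relies on the well-quasi-ordering of graphs under the minor relation (Robertson--Seymour) to supply a finite obstruction set for each fixed $k$, which is precisely where non-uniformity enters and why the statement only claims existence of the formulas rather than an algorithm producing them.
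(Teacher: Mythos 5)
Your proposal is correct and follows essentially the same route as the paper: the standard branch-set encoding for minors, direct first-order formulas for forbidden induced subgraphs and 2-colorability, and for item 4 the combination of a Robertson--Seymour finite obstruction set for the minor-closed parameter at level $k$ with a CMSO-definable torso-adjacency predicate (your "component with $u,v$ in its neighborhood" is equivalent to the paper's "path with internal vertices outside $X$") plus relativization of $\varphi_\hh$ to the components of $G-X$. No gaps.
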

\begin{proof}
For~\ref{item:minor} see for instance Corollary 1.14~\cite{CourcelleEngelfriet2012}, we repeat it here as we adapt it for~\ref{item:final}.
\begin{align*}
    \textsc{conn}(X,V,E) = \forall Y \subseteq V (( \exists u \in X : u \in Y \wedge \exists v\in X : v \notin Y)  \Rightarrow\\
    (\exists e \in E \exists u,v \in X : \inc(u,e) \wedge \inc(v,e) \wedge u \in Y \wedge v \notin Y))\\
   \varphi_{\mathrm{H-MINOR}}(X) = \exists Y_1,\dots,Y_n \subseteq X : (\bigwedge_{1 \leq i \leq n} ((\exists y : y \in Y_i) \wedge \textsc{conn}(Y_i, V, E)) \\
    \wedge \bigwedge_{1 \leq i < j \leq n} \neg \exists y ( y\in Y_i \wedge y \in Y_j) 
    \wedge \bigwedge_{(i,j) \in E(H)} \exists u,v (u \in Y_i \wedge v \in Y_j \wedge \adj(u,v)))
\end{align*}

For~\ref{item:forbiddensubgraph}, let $\mathcal{F}_\hh$ be the forbidden induced subgraph characterization of $\hh$, where $H \in \mathcal{F}_\hh$ is a graph on vertex set $[|V(H)|]$. A formula $\varphi_\hh$ is given below and is similar to that of checking for a minor.

\begin{align*}
   \varphi_{\hh}= \bigwedge_{H \in \mathcal{F}_\hh} \neg\exists v_1,\dots,v_{|V(H)|} \in V(G) : (\bigwedge_{1 \leq i < j \leq |V(H)|} v_i \neq v_j \\
    \wedge \bigwedge_{(i,j) \in E(H)} \adj(v_i,v_j) \wedge \bigwedge_{(i,j) \notin E(H)} \neg\adj(v_i,v_j))
\end{align*}

Since a graph is bipartite if and only if it has a proper 2-coloring, the following formula shows item~\ref{item:bipartite}. 
\begin{align*}
    \textsc{partition}(V,X_1,X_2) = & \forall_{v \in V}[(v \in X_1 \wedge v \notin X_2) \vee (v \notin X_1 \wedge v \in X_2)] \\
    \textsc{indp}(X) = & \forall_{u,v \in X}\neg \adj(u,v) \\
    \varphi_{BIP} = & \exists_{X_1,X_2 \subseteq V(G)} \textsc{partition}(V(G),X_1,X_2) \wedge \textsc{indp}(X_1) \wedge \textsc{indp}(X_2)
\end{align*}

Finally for~\ref{item:final} note that since $f$ is minor-closed, the set of graphs $F$ with $f(F) \leq k$ has a finite set of forbidden minors by the Graph Minor Theorem of Robertson and Seymour. Using formula~\ref{item:minor}, we can check whether $\torso_G(X)$ contains a forbidden minor. The only thing we need to change is that an edge $\{u,v\}$ is in $\torso_G(X)$ if either $u$ and $v$ are adjacent, or if there is a path whose internal vertices are not in $X$.

\begin{align*}
    \textsc{Tadj}(u,v,X) = \adj(u,v) \vee \exists P \subseteq V(G) (u,v \in P \wedge \textsc{conn}(P,V,E) \\ \wedge \forall w \in P (w = u \vee w = v \vee w \notin X))
\end{align*}

Finally we can check if each connected component $C$ of $\cc(G-X)$ is in $\hh$ by going over every vertex subset and verifying that if it is connected, disjoint from $X$, and maximal, then it induces a graph in $\hh$.
\end{proof}

Since both treewidth and treedepth are minor-closed parameters, we note the following from the lemma above.

\begin{note}\label{note:cmso_express_HTW_HED}
For each $k \in \mathbb{N}$ and graph class $\hh$ such that containment in $\hh$ is CMSO-expressible, there exists a formula $\varphi_{(k,\hh,\tw)}$ (respectively $\varphi_{(k,\hh,\td)}$) such that $(G,k)$ is a \textsc{yes}-instance of $\hh$-\textsc{treewidth} (respectively $\hh$-\textsc{elimination distance}) if and only if $G \models \varphi_{(k,\hh,\tw)}$ (respectively $G \models \varphi_{(k,\hh,\td)}$). 
\end{note}

CMSO formulas can have free variables. A graph together with an evaluation of free variables is called a \emph{structure}. We denote the problem of evaluating a CMSO formula $\varphi$ on a structure by $\textsc{CMSO}[\varphi]$. The following theorem is the main tool used to achieve our algorithms, we apply it only to formulas without free variables. The formulation is slightly different from its original form, see the appendix for details.

\begin{theorem}\cite[Theorem 23]{DBLP:conf/icalp/LokshtanovRSZ18}\label{thm:blackbox}
Let $\hat{\varphi}$ be a CMSO formula. For all $\hat{c} \colon \mathbb{N}_0 \to \mathbb{N}_0$, there exists $\hat{s} \colon \mathbb{N}_0 \to \mathbb{N}_0$ such that if $\textsc{CMSO}[\hat{\varphi}]$ parameterized by $k$ is FPT on $(\hat{s}(k),\hat{c}(k))$-unbreakable structures, then \textsc{CMSO}$[\hat{\varphi}]$ parameterized by $k$ is FPT on general structures.
\end{theorem}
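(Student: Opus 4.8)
The plan is to obtain Theorem~\ref{thm:blackbox} as a specialization of the recursive-understanding meta-theorem of Lokshtanov et al.~\cite{DBLP:conf/icalp/LokshtanovRSZ18}, rather than to reprove the considerable machinery behind recursive understanding. That meta-theorem establishes, for a fixed CMSO formula, an FPT-preserving reduction from model checking on arbitrary relational structures to model checking on sufficiently unbreakable structures; the discrepancies with the statement above are purely presentational — the exact quantification of the connectivity thresholds, the precise encoding of the input, and the fact that we only ever evaluate sentences, that is, formulas without free variables. So the work is to set up a faithful dictionary between the two formulations and then invoke the cited result.

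First I would fix this dictionary. Our $(s,c)$-separation and $(s,c)$-unbreakability are matched against the corresponding notions of~\cite{DBLP:conf/icalp/LokshtanovRSZ18} up to inconsequential off-by-one choices ($|C| \le c$ versus $|C| < c$, and $|X|,|Y| \ge s$ versus $>s$), which change the guaranteed threshold function only by an additive constant. The input of $\textsc{CMSO}[\hat{\varphi}]$ parameterized by $k$ is a graph together with the integer $k$; I would regard this as a structure in the sense of~\cite{DBLP:conf/icalp/LokshtanovRSZ18} whose graph part carries all relations referenced by $\hat{\varphi}$ and whose parameter $k$ is carried along untouched by $\hat{\varphi}$, and verify that such a structure is unbreakable in our sense precisely when its graph part is. Because $\hat{\varphi}$ has no free variables, no assignment to set or element variables has to be transported through the reduction, which is exactly what allows us to use the clean ``sentence'' form of the meta-theorem instead of its version for structures with distinguished sets.

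Next, the quantifier pattern ``for all $\hat{c}$, there exists $\hat{s}$'' is obtained from the constant statement ``for all $c \in \mathbb{N}_0$, there exists $s \in \mathbb{N}_0$'' of~\cite{DBLP:conf/icalp/LokshtanovRSZ18} by currying: given $\hat{c}$, define $\hat{s}(k)$ to be the threshold the cited theorem returns for the connectivity bound $\hat{c}(k)$ and the sentence $\hat{\varphi}$. Reading ``FPT on $(\hat{s}(k),\hat{c}(k))$-unbreakable structures'' slice by slice in $k$, for each fixed $k$ the pair $(\hat{s}(k),\hat{c}(k))$ is a pair of constants and the cited theorem applies verbatim; since $\hat{s}$ need not be given constructively, it is exactly this slicing that leaves our final algorithms non-uniform. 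The part I expect to require the most care is none of these steps in isolation but the bookkeeping that the reduction of~\cite{DBLP:conf/icalp/LokshtanovRSZ18} is parameter-oblivious — its decomposition depends only on $\hat{\varphi}$ and the connectivity bound, leaves the formula (hence the parameter in the model-checking sense) untouched, and composes without increasing $k$ — so that the per-$k$ applications are mutually consistent; once this is confirmed the statement follows.
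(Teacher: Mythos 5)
Your proposal is correct and follows essentially the same route as the paper's own proof: the paper also derives the statement by currying the constant-threshold version of the Lokshtanov et al.\ result (their Theorem~22, which for each fixed $c$ supplies an $s$ and lifts an $\Oh(n^d)$ algorithm from $(s,c)$-unbreakable structures to general ones), setting $\hat{s}(k)$ to the threshold returned for $c = \hat{c}(k)$ and applying the reduction separately for each fixed $k$, which is precisely the source of non-uniformity. Your additional bookkeeping about the dictionary of definitions and the sentence-versus-free-variable distinction is sound, though the paper handles it only by remarking that the formulation differs slightly from the original.
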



\section{Algorithms for computing $\hhdepth$ and $\hhtw$}
In this section we present our algorithms. In Section~\ref{subsec:witness} we present a key lemma. In Section~\ref{subsec:finite} we use it to deal with $\hh$ characterized by a finite number of forbidden induced subgraphs, and in Section~\ref{subsec:bip} we deal with bipartite graphs.

\subsection{Extracting witnesses from deletion sets contained in them} \label{subsec:witness}


Our strategy for solving $\hh$-\textsc{treewidth} and $\hh$-\textsc{elimination distance} is similar to that of lemmas 9 and 10 of Ganian et al.~\cite{DBLP:conf/stacs/GanianRS17} and is based on Proposition~\ref{prop:nice}. Given an $(s(k),c(k))$-unbreakable graph, either the treewidth of the graph is bounded (\ref{item:twbounded}) and we can solve the problem directly using Courcelle's Theorem, or each witness is of bounded size and introduces some structure (\ref{item:onelargecomponent}).

In the following lemma we assume we are in the latter case (hence the $\tw(G) > s(k)+k$ condition) and are given some $\hh$-deletion set $Y$. We show that given an $(s(k),c(k))$-unbreakable graph, in FPT time we can find a witness $X$ such that $Y \subseteq X$ if such a witness exists.



\begin{lemma}\label{lem:enumeration_find_witness}
Consider some $k \in \mathbb{N}$ and $c \colon \mathbb{N} \to \mathbb{N}$ such that $c(k) \geq k$. Let $\hh$ be a graph class such that containment in $\hh$ is solvable in polynomial time. There is an algorithm that runs in FPT time that, given an $(s(k),c(k))$-unbreakable graph for any $s \colon \mathbb{N} \to \mathbb{N}$ with $\tw(G) > s(k)+k$ and an $\hh$-deletion set $Y$ of size at most $s(k)+k$, decides whether there is an $\hhtw(G)$ witness $X$ of width at most $k-1$ (respectively $\hhdepth(G)$ witness $X$ of depth at most $k$) such that $Y \subseteq X$.
\end{lemma}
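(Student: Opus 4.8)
The plan is to exploit the structural dichotomy of Proposition~\ref{prop:nice}. Since the hypothesis $\tw(G) > s(k)+k$ rules out item~\ref{item:twbounded} (also for the $\hhdepth$ variant, as $\td(G) \ge \tw(G)$), every $\hhtw$ witness $X$ of width $\le k-1$ (resp.\ $\hhdepth$ witness of depth $\le k$) falls under item~\ref{item:onelargecomponent}: $G-X$ has a unique component $C$ with $|C| \ge s(k)$, and moreover $|N(C)| \le k$, $|V(G)\setminus N[C]| < s(k)$, and $|X| \le s(k)+k-1$. The first simplification is that since $\hh$ is hereditary and $G - Y \in \hh$, any superset $X \supseteq Y$ automatically has $G - X \in \hh$; hence a set $X$ with $Y \subseteq X$ is a valid witness precisely when $\tw(\torso_G(X)) \le k-1$ (resp.\ $\td(\torso_G(X)) \le k$). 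As $|X| \le s(k)+k-1$ is bounded by a function of $k$, for any concrete candidate set $X$ we can build $\torso_G(X)$ and test its treewidth (resp.\ treedepth) in time bounded in $k$. So it suffices to produce, in FPT time, a list of $f(k)\cdot n^{\Oh(1)}$ candidate sets guaranteed to contain a valid witness whenever one exists, and then test each.

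To generate this list, write $O := V(G)\setminus N[C]$, so that $X = N(C) \cup (X\cap O)$ with $N(C)\cap O = \emptyset$, $|N(C)| \le k$ by Note~\ref{note:smallneighborhood}, and $|O| < s(k)$. The crucial observation is that $C$ determines $N(C)$ and $O$, and conversely, given $C$, every valid witness whose large component is $C$ is obtained by adding all of $N(C)$ (which is forced) together with an arbitrary subset of the bounded set $O$ — there are at most $2^{s(k)}$ such subsets — and then discarding choices with $Y\not\subseteq X$. Hence the whole problem reduces to enumerating in FPT time a family of candidates for $C$ (equivalently for $N(C)$) that contains the large component of every witness. Here I would use the unbreakability of $G$ together with Lemma~\ref{lem:enumerate_small_connected_graphs}: a $(s(k),c(k))$-unbreakable graph with $c(k)\ge k$ contains only $f(k)\cdot n$ connected vertex sets of size $< s(k)$ with a neighborhood of size $\le k$, all enumerable in FPT time, and from these one extracts a bounded family of candidate separators of size $\le k$; for each candidate separator $A$ one takes $C$ to be the unique component of $G - A$ of size $\ge s(k)$. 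To keep the search focused, one can additionally restrict $C$ to lie inside the (at most one) connected component $D$ of $G-Y$ whose complement has fewer than $s(k)+k$ vertices — such a $D$ must contain $C$, since $|V(G)\setminus C| < s(k)+k$ and $Y$ is disjoint from $C$, while $V(G)\setminus D$ is then a bounded-size set over which the part $X \setminus D$ of the witness can also be brute-forced; and if no such $D$ exists then $n < 2(s(k)+k)$, so the instance is solved by brute force over all $X$.

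Putting the pieces together: enumerate the candidate components $C$; for each with $|N(C)| \le k$ and $|V(G)\setminus N[C]| < s(k)$, brute-force over all $X \cap O \subseteq O = V(G)\setminus N[C]$ and over all $X \setminus D$, assemble $X := N(C) \cup (X\cap O) \cup (X\setminus D)$, and accept if $Y \subseteq X$ and $\tw(\torso_G(X)) \le k-1$ (resp.\ $\td(\torso_G(X)) \le k$). Correctness is immediate in one direction (everything accepted is, by construction, a witness containing $Y$) and follows in the other from Proposition~\ref{prop:nice} together with completeness of the candidate enumeration; the running time is FPT because every enumeration layer contributes only an $f(k)\cdot n^{\Oh(1)}$ factor and each final test runs in polynomial time on a graph with at most $s(k)+k-1$ vertices.

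The step I expect to be the main obstacle is the enumeration of candidates for the large component $C$: one must argue that, in an $(s(k),c(k))$-unbreakable graph of treewidth exceeding $s(k)+k$, the neighborhood $N(C)$ of the large component of any witness — or at least enough data to reconstruct $C$ up to the bounded ambiguity that the subsequent brute force over $O$ absorbs — is captured by (the neighborhood of) one of the $f(k)\cdot n$ small connected sets delivered by Lemma~\ref{lem:enumerate_small_connected_graphs}. This is precisely where the unbreakability hypothesis and the bound $|V(G)\setminus N[C]| < s(k)$ are used in full, mirroring Lemmas~9 and~10 of Ganian et al.~\cite{DBLP:conf/stacs/GanianRS17}. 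The $\hhdepth$ case is handled by the same argument with $\tw$ replaced by $\td$ throughout.
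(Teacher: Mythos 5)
Your framework is right — Proposition~\ref{prop:nice} forces every witness into the ``one large component~$C$'' shape, the part of~$X$ inside~$V(G)\setminus N[C]$ can be brute-forced over~$2^{s(k)}$ subsets, and each assembled candidate can be verified in FPT time — but the proof has a genuine gap at exactly the step you flag as ``the main obstacle'': producing an FPT-sized candidate family for~$N(C)$. Your sketch (``from the $f(k)\cdot n$ small connected sets one extracts a bounded family of candidate separators of size $\le k$'') is not an argument. Lemma~\ref{lem:enumerate_small_connected_graphs} enumerates small connected sets \emph{containing a prescribed anchor vertex}; $C$ itself is too large to be enumerated this way, and $N(C)$ is the union of the neighborhoods of the (possibly many) small components of~$G-N[C]$, so recovering it by combining enumerated sets without knowing which ones to combine is not FPT. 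Your fallback of restricting attention to the unique huge component~$D$ of~$G-Y$ does not close this: $X\cap D$ is disjoint from~$Y$, contains~$N(C)\cap D$, and can consist of up to~$s(k)+k-1$ vertices sitting anywhere inside~$D$, so there is no bounded set to brute-force over and no anchor from which to invoke Lemma~\ref{lem:enumerate_small_connected_graphs}.

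The paper resolves this with two ideas you are missing, and this is where the hypothesis ``$Y\subseteq X$'' does real work beyond the hereditary observation. First, the vertices of~$Y$ serve as anchors: guess~$B=N(C)\cap Y$ (at most~$\binom{|Y|}{k}$ choices), and for each~$y\in Y\setminus B$ guess, via Lemma~\ref{lem:enumerate_small_connected_graphs}, the small component~$C_y$ of~$G-N[C]$ containing~$y$; then~$A:=Y\cup\bigcup_y N(C_y)$ is a subset of~$X$. Second — and this is the crux — one must show that~$A$ actually \emph{covers}~$N(C)$, which is false for arbitrary witnesses. The paper proves it for a witness~$X$ of \emph{minimum cardinality} containing~$Y$ by an exchange argument: if some~$v\in N(C)\setminus A$, the piece~$C_v^*$ of~$X$ around~$v$ contains no vertex of~$Y$, can be merged into~$C$ without leaving~$\hh$ (since~$G-Y\in\hh$ and~$\hh$ is hereditary) and without enlarging the torso, so~$X\setminus C_v^*$ is a smaller witness — contradiction. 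Without this minimality-plus-exchange argument, the claim that your candidate family is complete does not go through.
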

\begin{proof}
We refer to a witness as either being an $\hhtw$ witness of width at most $k-1$ or an $\hhdepth$ witness of depth at most $k$. Given a set $X \subseteq V(G)$, we can verify that it is a witness by testing whether $\tw(\torso_G(X)) \leq k-1$ (respectively~$\td(\torso_G(X)) \leq k$) in FPT time~\cite{Bodlaender96,ReidlRVS14} and verifying that each connected component $C \in \cc(G-X)$ is contained in $\hh$, which can be done in polynomial time by assumption.

We show that we can find a witness if it exists, by doing the above verification for FPT many vertex subsets $D \subseteq V(G)$, as follows.
\begin{enumerate}
    \item For each $y \in Y$, let $\mathcal{C}_y$ be the set of connected vertex sets $S$ with $y \in S$, $|S| \leq s(k)$ and $|N(S)| \leq k$. For each $B \subseteq Y$ with $|B| \leq k$, a choice tuple $t_B$ contains an entry for each $y \in Y \setminus B$, where entry $t_B[y]$ is some set $C_y \in \mathcal{C}_{y}$.
    \item For each $B \subseteq Y$ with $|B| \leq k$ and each choice tuple $t_B$, if $G - (Y \cup \bigcup_{y \in Y \setminus B} N(t_B[y]))$ has one connected component $C$ of size at least $s(k)$ and $|V(G) \setminus N[C]| < s(k)$, apply the witness verification test to $D = Y \cup \bigcup_{y \in Y \setminus B} N(t_B[y]) \cup Q$ for each $Q \subseteq V(G) \setminus N[C]$.
    \item Return the logical or of all witness verification tests.
\end{enumerate}
We argue that the algorithm runs in FPT time. Note that as $|Y| \leq s(k) + k$, there are at most $\binom{s(k)+k}{k}$ choices for $B$. Furthermore $\mathcal{C}_y$ can be computed in FPT time using Lemma~\ref{lem:enumerate_small_connected_graphs}, hence the number of choice tuples is also FPT many. For each choice for $B$ and each choice tuple $t_B$, there are at most $2^{s(k)}$ choices for $Q$. Since each vertex set can be verified to be a witness in FPT time, the running time claim follows.

Finally we argue correctness of the algorithm. Since $\tw(G) > s(k) + k$ (and also $\td(G) > s(k) + k$ as $\tw(G) \leq \td(G) - 1$), by Proposition~\ref{prop:nice} any witness $X$ is of size at most $s(k) + k - 1$, the graph $G-X$ has exactly one large connected component $C$ of size at least $s(k)$, and $|V(G) \setminus N[C]| < s(k)$.

Suppose $G$ has a witness that is a superset of $Y$. Fix some witness $X$ of minimal cardinality with $Y \subseteq X$ and let $C$ be the unique component of size at least $s(k)$ of $G-X$. Note that since $C \cap X = \emptyset$, we have $C \cap Y = \emptyset$. 

Let $B = N(C) \cap Y$. By Note~\ref{note:smallneighborhood} we have $|N(C)| \leq k$, hence the branching algorithm makes this choice for $B$ at some point. For each $y \in Y \setminus B$, let $C_y$ be the connected component of $G-N[C]$ containing $y$. \jjh{See Figure~\ref{fig:find_witness} for a sketch of the situation.} Since $|V(G) \setminus N[C]| < s(k)$ and $|N(C)| \leq k$, we have that $|V(C_y)| < s(k)$ and $|N(C_y)| \leq k$. Note that $N(C_y) \subseteq N(C) \subseteq X$.
The branching algorithm at some point tries the choice tuple $t_B$ where $t_B[y] = C_y$ for each $y \in Y \setminus B$.
Consider the set $A = Y \cup \bigcup_{y \in Y \setminus B} N(t_B[y])$. Note that $A \subseteq X$ by construction. 

If $N(C) \subseteq A$, then the single large component of $G-A$ of size at least $s(k)$ is exactly $C$. Since $|V(G) \setminus N[C]| < s(k)$, it follows that $X = A \cup Q$ for some $Q \subseteq V(G) \setminus N[C]$. It follows that the algorithm correctly identifies $X$ in this case. 

The only remaining case is $N(C) \not\subseteq A$. We argue that this cannot happen when witness $X$ is of minimal cardinality. Suppose~$N(C) \not \subseteq A$ and let $v \in N(C) \setminus A$. Let $Z = Y \cup \bigcup_{y \in Y \setminus B} N[C_y]$ and note that we subtract the \emph{closed} neighborhoods of the components, instead of the \emph{open} neighborhoods as in the definition of~$A$. 
Let $C_v^*$ be the connected component of $G-(C \cup Z)$ that contains $v$. We argue that $X \setminus C_v^*$ is a witness. \jjh{Again consult Figure~\ref{fig:find_witness} for an intuition.}
Note that $C_v^* \cap Y = \emptyset$ by construction as $Y \subseteq Z$. Because $Y$ is an $\hh$-deletion set, it follows that for each connected component $C'$ in $G - (X \setminus C_v^*)$ we have $C' \in \hh$. 
\jjh{We argue that $N(C_v^*) \subseteq N[C]$. First we argue that $N(C_v^*) \cap X \subseteq N(C)$}.
Indeed if any vertex in $X \setminus (Z \cup N(C))$ was adjacent to $C_v^*$, the vertex itself would belong to $C_v^*$. If any vertex $z \in Z \setminus N(C)$ was adjacent to $w \in C_v^*$, then either $w \in N(C)$ and hence $w \in N(C_y)$ for some $y \in Y$ and hence $w \in Z$, or $w \in C_v^* \setminus N(C)$ and $w \in C_y$ for some $y \in Y$ and hence $w \in Z$; in both cases we contradict $w \in C_v^*$. \jjh{Similar arguments show that $N(C_v^*) \setminus X \subseteq C$. Since $N(C_v^*) \subseteq N[C]$ and $v$ is adjacent to at least one vertex in $C$ as $v \in N(C)$, it follows that $C \cup C_v^*$ is a connected component of $G - (X \setminus C_v^*)$ with $N(C \cup C_v^*) \subseteq N(C)$.}
Therefore $\torso_G(X \setminus C_v^*)$ is an induced subgraph of $\torso_G(X)$. We conclude that $X \setminus C_v^*$ is a witness. Since $X$ was assumed to be of minimal cardinality, we arrive at a contradiction and hence $A \supseteq N(C)$.

\begin{figure}
    \centering
    \includegraphics[page=1]{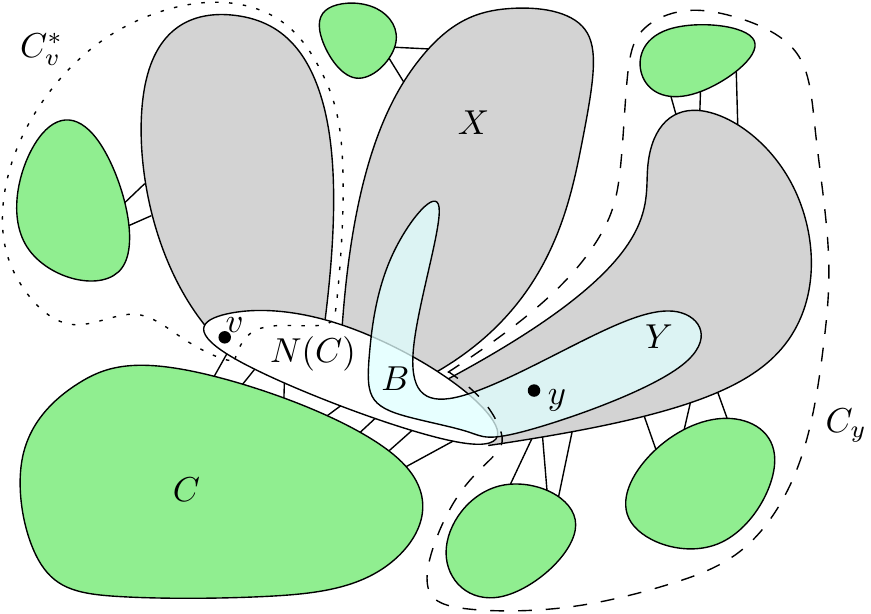}
    \caption{\jjh{Situation sketch of Lemma~\ref{lem:enumeration_find_witness}. The set $X$ in grey denotes a witness and the set $C$ is the single large component of $G-X$.}}
    \label{fig:find_witness}
\end{figure}

\end{proof}

\subsection{Classes $\hh$ with finitely many forbidden induced subgraphs} \label{subsec:finite}

\begin{theorem} \label{thm:finiteobstructions:fpt}
Let $\hh$ be a graph class characterized by a finite set of forbidden induced subgraphs. Then $\hh$-\textsc{treewidth} and $\hh$-\textsc{elimination distance} are non-uniformly fixed-parameter tractable.
\end{theorem}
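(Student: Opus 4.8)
The plan is to express each of the two questions as a CMSO model-checking problem, apply the unbreakability meta-theorem of Theorem~\ref{thm:blackbox} to restrict attention to unbreakable inputs, and solve those by combining the treewidth dichotomy of Proposition~\ref{prop:nice} with Lemma~\ref{lem:enumeration_find_witness}; the $\hh$-deletion set that Lemma~\ref{lem:enumeration_find_witness} needs would be supplied by a bounded-depth branching routine. Concretely, by Lemma~\ref{lem:cmsoformulas}\ref{item:forbiddensubgraph} membership in $\hh$ is CMSO-expressible, so Note~\ref{note:cmso_express_HTW_HED} gives a CMSO sentence $\varphi$ (namely $\varphi_{(k,\hh,\tw)}$, resp.\ $\varphi_{(k,\hh,\td)}$) with $G \models \varphi$ if and only if $(G,k)$ is a \textsc{yes}-instance of $\hh$-\textsc{treewidth} (resp.\ $\hh$-\textsc{elimination distance}). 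Set $\hat c(k) := 2k$ and let $\hat s$ be the function that Theorem~\ref{thm:blackbox} returns for $\hat\varphi := \varphi$ and $\hat c$. By that theorem it then suffices, for every fixed $k$, to decide $G \models \varphi$ in polynomial time on $(\hat s(k), 2k)$-unbreakable graphs $G$; since the algorithm only needs to know that $\hat s(k)$ exists, it will be non-uniform, which accounts for the non-uniformity in the statement.

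On a $(\hat s(k), 2k)$-unbreakable graph $G$, I would first run the fixed-parameter treewidth algorithm~\cite{Bodlaender96} with target width $\hat s(k) + k$. If it returns a tree decomposition of that width, $G \models \varphi$ can be decided in FPT time by Courcelle's theorem for CMSO. Otherwise $\tw(G) > \hat s(k) + k$, and if $(G,k)$ is a \textsc{yes}-instance then (as $2k \ge k$) condition~\ref{item:onelargecomponent} of Proposition~\ref{prop:nice} holds: every witness $X$ (an $\hhtw$ witness of width at most $k-1$, resp.\ an $\hhdepth$ witness of depth at most $k$) has $|X| \le \hat s(k)+k-1$, and $G-X$ has a unique component of size at least $\hat s(k)$. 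Since $\tw(G) > \hat s(k)+k$ is exactly the precondition of Lemma~\ref{lem:enumeration_find_witness} (taken with $c := \hat c$), it remains to produce -- whenever a witness exists -- an $\hh$-deletion set $Y$ with $|Y| \le \hat s(k)+k$ that is contained in some witness, run Lemma~\ref{lem:enumeration_find_witness} on every such candidate $Y$, and output the disjunction of the answers; membership in $\hh$ is polynomial-time testable because $\mathcal{F}_{\hh}$ is finite.

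To obtain $Y$ I would branch. Let $D$ bound the order of a minimal connected graph not in $\hh$ -- when every graph in $\mathcal{F}_{\hh}$ is connected one may simply take $D = \max_{H \in \mathcal{F}_{\hh}} |V(H)|$. Starting from $Y = \emptyset$: while $|Y| \le \hat s(k)+k$ and some connected component of $G - Y$ is not in $\hh$, find by brute force over vertex sets of size at most $D$ a set $U$ with $G[U]$ connected and $G[U] \notin \hh$, and branch by adding one vertex of $U$ to $Y$. The resulting search tree has $D^{\Oh(\hat s(k)+k)} = f(k)$ leaves, each reachable in polynomial time; at every leaf with $|Y| \le \hat s(k)+k$ and every component of $G-Y$ in $\hh$, call Lemma~\ref{lem:enumeration_find_witness} on $Y$. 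A \textsc{yes}-answer of Lemma~\ref{lem:enumeration_find_witness} always certifies a genuine witness, so the procedure never errs on \textsc{no}-instances, and on \textsc{no}-instances in the high-treewidth case no $Y$ yields a \textsc{yes}. For \textsc{yes}-instances in that case, fix a witness $X$ of minimum size (so $|X| \le \hat s(k)+k-1$). Every branching set $U$ meets $X$: were $U \cap X = \emptyset$, then $G[U]$ -- connected and disjoint from $X$ -- would lie inside a single component of $G-X$, contradicting that that component belongs to $\hh$. Hence along the branch that always adds a vertex of $X$, the invariant $Y \subseteq X$ holds; by the time $Y = X$ (or before) all components of $G-Y \supseteq G-X$ lie in $\hh$, so the loop halts and Lemma~\ref{lem:enumeration_find_witness} is called on this $Y \subseteq X$ and answers \textsc{yes}. (Strictly speaking Lemma~\ref{lem:enumeration_find_witness} is stated for an $\hh$-deletion set $Y$, but its proof uses only that every component of $G-Y$ lies in $\hh$, which is what the branching guarantees; when the forbidden subgraphs are all connected the two notions coincide.)

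I expect the main obstacle to be the correctness of the branching step, and in particular the treatment of \emph{disconnected} forbidden induced subgraphs: such a subgraph of $G-Y$ may be spread over several components of $G-X$ and so need not meet the minimum witness $X$, which is why one must branch on a bounded-size \emph{connected} induced subgraph outside $\hh$. That this is always possible rests on the fact -- easy, but worth isolating -- that the minimal connected graphs not in $\hh$ form a finite family of bounded order (equivalently, that the closure of $\hh$ under ``every connected component lies in $\hh$'' is again characterized by finitely many forbidden induced subgraphs, all connected); the intuition is that a long induced connector inside such a minimal graph would already contain a smaller connected obstruction. The remaining ingredients -- the treewidth test, Courcelle's theorem, bounding the number of branches, and the final invocation of Theorem~\ref{thm:blackbox} -- are routine given the machinery already set up.
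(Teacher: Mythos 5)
Your overall architecture is the same as the paper's: CMSO-expressibility plus Theorem~\ref{thm:blackbox} to reduce to unbreakable graphs, the dichotomy of Proposition~\ref{prop:nice} with Courcelle's theorem in the low-treewidth case, and a branching routine feeding candidate deletion sets into Lemma~\ref{lem:enumeration_find_witness}. The only place you deviate is the branching step, and this is where there is a genuine gap. Your argument rests on the claim that the minimal \emph{connected} graphs not in $\hh$ have bounded order (equivalently, that the class of graphs all of whose components lie in $\hh$ is again characterized by finitely many forbidden induced subgraphs). This is false once $\mathcal{F}_\hh$ contains a disconnected graph. Take $\mathcal{F}_\hh = \{K_3 + K_3\}$ and let $H_n$ consist of two triangles joined by an induced path with $n$ internal vertices. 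Then $H_n \notin \hh$, but the only triangles of $H_n$ are the two end triangles, so any connected induced subgraph containing an induced $K_3+K_3$ must contain both triangles together with the entire connecting path; hence $H_n$ itself is a minimal connected obstruction, and these have unbounded order. Your stated intuition that ``a long induced connector would already contain a smaller connected obstruction'' fails here because an induced path contains no copy of $K_3+K_3$. Consequently, when a component of $G-Y$ lies outside $\hh$ there need not exist any set $U$ of size at most $D$ with $G[U]$ connected and outside $\hh$, and your branching gets stuck precisely on the branch that is supposed to stay inside the witness $X$.

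That said, you have put your finger on a real subtlety rather than invented one. The paper's proof branches on forbidden induced subgraphs and justifies this via the assertion that ``any witness $X$ is an $\hh$-deletion set''; as you observe, a witness only guarantees that each \emph{component} of $G-X$ lies in $\hh$, so for a disconnected obstruction such as $K_3+K_3$ a witness need not be an $\hh$-deletion set, and a forbidden induced subgraph of $G-Y$ may avoid $X$ entirely. When every graph in $\mathcal{F}_\hh$ is connected, your proof is correct and essentially coincides with the paper's: the minimal connected obstructions are then just the (minimal) members of $\mathcal{F}_\hh$, and your observation that the proof of Lemma~\ref{lem:enumeration_find_witness} only uses that every component of $G-Y$ lies in $\hh$ is accurate. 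But for disconnected forbidden induced subgraphs an additional idea is needed; bounded-size connected obstructions do not exist in general, so the gap stands.
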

\begin{proof}
By Lemma~\ref{lem:cmsoformulas} containment in $\hh$ is CMSO expressible, therefore by Note~\ref{note:cmso_express_HTW_HED} there exists a formula $\varphi_{(k,\hh,f)}$ for each $f \in \{\tw,\td\}$ such that an instance $(G,k)$ of $\hh$-\textsc{treewidth} (respectively $\hh$-\textsc{elimination distance}) is a \textsc{yes}-instance if and only if $G \models \varphi_{(k,\hh,f)}$. Furthermore, containment in $\hh$ is polynomial time solvable, as we can verify that a graph does not contain any of the finitely many forbidden induced subgraphs.

We argue that both problems are in FPT when the input graph $G$ is $(s(k),k)$-unbreakable for any $s \colon \mathbb{N} \to \mathbb{N}$. If $\tw(G) \leq s(k) + k$, we solve the problems directly using Courcelle's Theorem~\cite{DBLP:journals/iandc/Courcelle90} using $\varphi_{(k,\hh,f)}$. Otherwise by Proposition~\ref{prop:nice} each witness $X$ is of size at most $s(k)+k-1$. We can enumerate all minimal $\hh$-deletion sets $\mathcal{Y}$ of size at most $s(k)+k-1$ in FPT time by finding a forbidden induced subgraph and branching in all finitely many ways of destroying it. Since any witness $X$ is an $\hh$-deletion set, for some $Y \in \mathcal{Y}$ we have $Y \subseteq X$. Hence we solve the problem by calling Lemma~\ref{lem:enumeration_find_witness} for each $Y \in \mathcal{Y}$. Applying Theorem~\ref{thm:blackbox} concludes the proof.
\end{proof}

Using known characterizations by a finite number of forbidden induced subgraphs (cf.~\cite{BrandstadtLS99}) we obtain the following corollary to Theorem~\ref{thm:finiteobstructions:fpt}.

\begin{corollary}
Let $\hh$ be set of graphs that are either (1)~cliques, (2)~claw-free, (3)~of degree at most~$d$ for fixed~$d$, (4)~cographs, or~(5) split graphs. $\hh$-\textsc{treewidth} and $\hh$-\textsc{elimination distance} are non-uniformly fixed-parameter tractable.
\end{corollary}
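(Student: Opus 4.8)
The plan is to observe that the corollary is an immediate consequence of Theorem~\ref{thm:finiteobstructions:fpt}: it suffices to exhibit, for each of the five graph classes, a characterization by a finite set of forbidden induced subgraphs.

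First I would dispatch the easy cases. The class of cliques is characterized by the single forbidden induced subgraph $2K_1$ (two non-adjacent vertices); claw-free graphs are by definition characterized by the single obstruction $K_{1,3}$; cographs are exactly the $P_4$-free graphs, a classical fact; and, by the theorem of F\"oldes and Hammer, split graphs are exactly the $\{2K_2, C_4, C_5\}$-free graphs, again a finite family. For the class of graphs of maximum degree at most $d$, I would note that a graph fails to lie in the class precisely when one of its vertices has at least $d+1$ neighbours, equivalently precisely when it contains, as an \emph{induced} subgraph, some graph on $d+2$ vertices in which a vertex has degree exactly $d+1$; since there are only finitely many such graphs up to isomorphism, this is the desired finite obstruction set. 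All of these characterizations are collected in the survey~\cite{BrandstadtLS99}. In each case the class is hereditary, as required, since an induced subgraph of a clique (resp. claw-free, bounded-degree, cograph, split graph) is again of the same type.

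With a finite forbidden-induced-subgraph characterization in hand for each $\hh$, Theorem~\ref{thm:finiteobstructions:fpt} directly yields that $\hh$-\textsc{treewidth} and $\hh$-\textsc{elimination distance} are non-uniformly fixed-parameter tractable. I expect no genuine obstacle here; the only point needing a moment's attention is the bounded-degree case, where one must be careful that forbidding the star $K_{1,d+1}$ merely as a \emph{subgraph} does not translate into forbidding a single \emph{induced} subgraph — so the obstruction family must include every graph obtained from $K_{1,d+1}$ by adding edges among the leaves, which is nonetheless still a finite set.
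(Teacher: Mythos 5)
Your proposal is correct and matches the paper's approach exactly: the paper derives the corollary from Theorem~\ref{thm:finiteobstructions:fpt} by citing the known finite forbidden-induced-subgraph characterizations (cf.~\cite{BrandstadtLS99}), which are precisely the ones you list ($2K_1$ for cliques, $K_{1,3}$ for claw-free, $P_4$ for cographs, $\{2K_2,C_4,C_5\}$ for split graphs, and the finitely many $(d+2)$-vertex graphs containing a vertex of degree $d+1$ for bounded degree). Your extra remark about converting the forbidden-subgraph condition $K_{1,d+1}$ into a finite family of forbidden \emph{induced} subgraphs is a correct and worthwhile clarification of a point the paper leaves implicit.
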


%

\subsection{Bipartite graphs} \label{subsec:bip}
We use shorthand $\bip$ to denote the class of bipartite graphs. The problem of deleting $k$ vertices to obtain a bipartite graph is better known as the \textsc{Odd Cycle Transversal (OCT)} problem. The problem was shown to be FPT for the first time by Reed et al.~\cite{DBLP:journals/orl/ReedSV04}. We use some of their ingredients to show the following.




\begin{lemma}\label{lem:bip_enum}
The $\bip$-\textsc{treewidth} and $\bip$-\textsc{elimination distance} problems are non-uniformly fixed-parameter tractable.
\end{lemma}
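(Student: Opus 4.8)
The plan is to follow the same high-level template as Theorem~\ref{thm:finiteobstructions:fpt}: invoke the CMSO meta-theorem (Theorem~\ref{thm:blackbox}) to reduce to $(s(k),c(k))$-unbreakable graphs, then on such graphs either dispatch the bounded-treewidth case via Courcelle's Theorem using~$\varphi_{(k,\bip,f)}$ (available by Lemma~\ref{lem:cmsoformulas}\ref{item:bipartite} and Note~\ref{note:cmso_express_HTW_HED}), or else use Proposition~\ref{prop:nice} to conclude that every witness~$X$ has $|X| \le s(k)+k-1$, and feed an appropriate $\bip$-deletion set~$Y \subseteq X$ to Lemma~\ref{lem:enumeration_find_witness}. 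The key difference from the finite-obstruction case is that we cannot afford to branch on forbidden induced subgraphs, since bipartite graphs are characterized by \emph{infinitely} many forbidden subgraphs (all odd cycles). So the real work is producing, in FPT time, an $\bip$-deletion set~$Y$ of size at most $s(k)+k-1$ that is guaranteed to be a subset of \emph{some} witness.

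The main idea, as hinted in the introduction, is to combine iterative compression (the Reed--Smith--Vetta technique for OCT) with the scarcity of small minimal separators in unbreakable graphs. I would set~$c(k) = 2k$. Suppose a witness~$X$ exists; by Note~\ref{note:smallneighborhood} each component~$C$ of~$G-X$ has $|N(C)| \le k$, and in the unbreakable/high-treewidth case there is exactly one large component~$C^\ast$ with $|N(C^\ast)| \le k$ and everything outside $N[C^\ast]$ of size $<s(k)$. Now~$X$ is itself an OCT of~$G$, but it may be huge (it contains essentially all of~$V(G)\setminus N[C^\ast]$ plus~$N(C^\ast)$). The point is that we do not need all of~$X$: we only need an OCT~$Y$ that is \emph{contained in} some witness. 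I would argue that the ``interesting'' part of~$X$ — the part that must be hit — is captured by a bounded-size object. Concretely, restrict attention to~$G[N[C^\ast]]$: here $N(C^\ast)$ is a separator of size $\le k$ whose removal leaves the bipartite graph $C^\ast$, so $N(C^\ast)$ is already an OCT of $G[N[C^\ast]]$ of size~$\le k$. Using the RSV characterization, finding a minimum OCT given a slightly-too-large one reduces to finding, over all $2$-colorings of the given OCT, a minimum vertex separator between the two resulting color classes; and in an $(s(k),2k)$-unbreakable graph there are only few minimal $(u,v)$-separators of size~$\le 2k$, so these separators — hence candidate OCTs — can be enumerated in FPT time. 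Collecting all the enumerated OCTs of $G[N[C^\ast]]$ together with the $<s(k)$ vertices outside $N[C^\ast]$ yields an FPT-sized family of candidate deletion sets~$\mathcal{Y}$, one of which is a subset of the witness~$X$; but since we do not know $C^\ast$ in advance, I would instead enumerate candidate separators/colorings directly over~$G$, guarding by the size bound $2k$ and by the unbreakability-based counting lemma for minimal separators, to obtain the family~$\mathcal{Y}$ without needing to know~$X$.

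Having built~$\mathcal{Y}$, I would call Lemma~\ref{lem:enumeration_find_witness} on each~$Y \in \mathcal{Y}$ (each has size $\le s(k)+k-1 \le s(k)+k$, and $c(k)=2k\ge k$, and containment in $\bip$ is polynomial-time testable via $2$-coloring, so all hypotheses are met), and return the logical or; this decides whether a witness exists on unbreakable graphs in FPT time. Finally, Theorem~\ref{thm:blackbox} lifts this to general graphs, giving non-uniform FPT algorithms for $\bip$-\textsc{treewidth} and $\bip$-\textsc{elimination distance}.

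I expect the main obstacle to be the second paragraph: correctly defining the enumeration of candidate deletion sets and proving that some enumerated~$Y$ is a subset of a witness, without circularly assuming knowledge of the witness~$X$ or of the large component~$C^\ast$. The delicate points are (i) formalizing ``few minimal $(u,v)$-separators of size $\le 2k$ in $(s(k),2k)$-unbreakable graphs'' — presumably a counting bound in the spirit of Lemma~\ref{lem:enumerate_small_connected_graphs} combined with the definition of unbreakability — and enumerating them efficiently; (ii) handling the possibility that $\tw(G)$ is not large, which is already covered by the Courcelle branch; and (iii) making sure the RSV-style argument, which is usually stated for \emph{global} minimum OCT, still goes through when we only want an OCT that sits inside an unknown witness rather than a globally minimum one — here I would use that the restriction $N(C^\ast)$ is a genuine small OCT of $G[N[C^\ast]]$ to seed the compression, and that the torso structure guarantees the rest of~$X$ outside $N[C^\ast]$ is small enough to guess outright.
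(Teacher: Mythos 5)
Your overall architecture is exactly the paper's: reduce to $(s(k),2k)$-unbreakable graphs via Theorem~\ref{thm:blackbox}, dispatch bounded treewidth with Courcelle, and otherwise produce an FPT-size family of odd cycle transversals, one of which is contained in a witness, to feed into Lemma~\ref{lem:enumeration_find_witness}. The final assembly is fine. But the step you yourself flag as ``the main obstacle'' --- constructing the family $\mathcal{Y}$ without knowing $X$ or $C^\ast$ --- is genuinely missing, and it is precisely where the proof has to do its work. Two concrete problems. First, a minimum OCT $W$ of $G$ need not avoid $C^\ast$, nor the small components of $G-X$; it can be scattered across all three parts $C^\ast$, $X$, and $V(G)\setminus N[C^\ast]$. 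The paper resolves this with a two-stage cleanup: it first makes $W$ \emph{weakly consistent} with $X$ (i.e.\ $W \subseteq C^\ast \cup X$ with $|W \cap C^\ast| \le k$) by guessing the partition $(W_L,W_I,W_R)$ of $W$ according to these three parts and replacing the portion $W_R$ lying in small components by the open neighborhoods $N(C_y)$ of small connected sets enumerated via Lemma~\ref{lem:enumerate_small_connected_graphs} (these neighborhoods sit inside $X$); only then does it eliminate $W_L \subseteq C^\ast$ via the RSV separator equivalence, rerouting through a separator contained in $N(C^\ast) \subseteq X$. Your sketch jumps straight to the RSV step inside $G[N[C^\ast]]$ and never explains how the part of the OCT outside $N[C^\ast]$ is brought into $X$.

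Second, your enumeration mechanism rests on an unstated ``counting lemma for minimal $(u,v)$-separators of size $\le 2k$ in unbreakable graphs.'' No such counting lemma is proved or invoked in the paper; the introduction's phrasing is informal shorthand. What actually makes the separator enumerable is different: the target separator $Y \setminus W_I \subseteq N(C^\ast)$ is dominated by the $(A,R)$-separator $W_L \cup N(C^\ast)$ of size at most $2k$, so unbreakability forces one of the two sides to have fewer than $s(k)$ vertices after its removal; the algorithm then guesses which of $A,R$ is the small side $Q$, guesses the at most $k$ vertices $D = N(C^\ast)\cap Q$ directly, and recovers the remaining separator vertices as $N(C_y)$ for small connected sets $C_y \ni y$ ($y \in Q\setminus D$) with $|N(C_y)| \le 2k$, again via Lemma~\ref{lem:enumerate_small_connected_graphs}. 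Without this concrete small-side-plus-neighborhood-enumeration argument (or a genuine proof of your separator-counting claim), the candidate family $\mathcal{Y}$ cannot be built, and the proof does not go through.
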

\begin{proof}
By Lemma~\ref{lem:cmsoformulas} containment in the class of bipartite graphs is CMSO expressible, therefore by Note~\ref{note:cmso_express_HTW_HED} there exists a formula $\varphi_{(k,\bip,f)}$ for each $f \in \{\tw,\td\}$ such that an instance $(G,k)$ of $\bip$-\textsc{treewidth} (respectively $\bip$-\textsc{elimination distance}) is a \textsc{yes}-instance if and only if $G \models \varphi_{(k,\bip,f)}$. We argue that both problems are FPT in $(s(k),2k)$-unbreakable graphs for any $s \colon \mathbb{N} \to \mathbb{N}$. Note that the theorem then follows by Theorem~\ref{thm:blackbox}.

Let~$G$ be an $(s(k),2k)$-unbreakable graph. As before, we use the term witness to either refer to an $\hhtw$ witness of width at most $k-1$ or an $\hhdepth$ witness of depth at most $k$, depending on the problem being solved. We first test whether $\tw(G) \leq s(k)+k$, in FPT time~\cite{Bodlaender96}. If so, then we can solve the problems directly using Courcelle's Theorem~\cite{DBLP:journals/iandc/Courcelle90} using $\varphi_{(k,\bip,f)}$. Otherwise by Proposition~\ref{prop:nice} the size of each witness in~$G$ is at most $s(k)+k-1$, and for each witness~$X$ there is a unique connected component of~$G-X$ of at least~$s(k)$ vertices, henceforth called the \emph{large component}. We use a two-step process to find an odd cycle transversal that is a subset of some witness (if a witness exists), so that we may invoke Lemma~\ref{lem:enumeration_find_witness} to find a witness.

For a witness~$X^*$ in~$G$ and an odd cycle transversal~$W$ of~$G$, we say that a partition~$(W_L, W_I)$ of~$W$ is \emph{weakly consistent} with~$X^*$ if for the unique large component~$C$ of~$G - X^*$ we have that~$W \cap C = W_L$, $|W_L| \leq k$, and~$W \subseteq C \cup X^*$. An odd cycle transversal~$W$ is \emph{strongly consistent} with~$X^*$ if~$W \subseteq X^*$.

The following claim encapsulates the connection between odd cycle transversals and separators that forms the key of the iterative-compression algorithm for OCT due to Reed, Smith, and Vetta~\cite{DBLP:journals/orl/ReedSV04}.

\begin{subclaim}\label{claim:oct:separator}
For each partitioned OCT~$W = (W_L, W_I)$ of~$G$, for each partition of~$W_L = W_{L,1} \cup W_{L,2}$ into two independent sets, for each proper 2-coloring~$c$ of~$G - W$, we have the following equivalence for each~$X \subseteq V(G) \setminus W$: the graph~$(G - W_I) - X$ has a proper 2-coloring with~$W_{L,1}$ color~$1$ and~$W_{L,2}$ color~$2$ \emph{if and only if} the set~$X$ separates~$A$ from~$R$ in the graph~$G - W$, with:
\begin{align*}
A &= (N_{G-W_I}(W_{L,1}) \cap c^{-1}(1)) \cup (N_{G-W_I}(W_{L,2}) \cap c^{-1}(2)) \\
R &= (N_{G-W_I}(W_{L,1}) \cap c^{-1}(2)) \cup (N_{G-W_I}(W_{L,2}) \cap c^{-1}(1)).
\end{align*}
Observe that~$c^{-1}(i) \subseteq V(G - W)$ for each~$i \in [2]$, so that~$A \cup R \subseteq V(G - W)$, and that the separator~$X$ is allowed to intersect~$A \cup R$.
\end{subclaim}
\begin{claimproof}
($\Rightarrow$) Suppose that~$(G - W_I) - X$ has a proper 2-coloring with~$W_{L,1}$ color~$1$ and~$W_{L,2}$ color~$2$. Suppose for a contradiction that~$X$ is not an~$(A,R)$-separator in~$G-W$, that is, in~$(G-W)-X$ there is a connected component~$H$ simultaneously containing a vertex~$a \in A$ and a vertex~$r \in R$. Note that~$H$ is also a connected subgraph of~$G-W$ and therefore bipartite, which means that if~$|V(H)| \geq 2$ there is a unique partition of~$H$ into two independent sets, so that~$H$ has exactly two proper $2$-colorings depending on which independent set is called color~$1$ and which is called color~$2$. Note that if~$|V(H)| = 1$, the fact that~$H$ has exactly two proper $2$-colorings is trivial. It follows that any proper $2$-coloring of~$H$ either coincides with the 2-coloring~$c$ of~$G-W$, or is such that every vertex gets the opposite of its current color under~$c$.

The fact that~$a \in A$ means by definition that either we have~$c(a) = 1$ and~$a$ is adjacent to a vertex of~$W_{L,1}$, or~$c(a) = 2$ and~$a$ is adjacent to a vertex of~$W_2$. In either case, it shows that in any proper~$2$-coloring of~$(G - W_I) - X$ in which~$W_{L,1}$ gets color~$1$ and~$W_{L,2}$ gets color~$2$, the color of~$a$ must be different from its color under~$c$. By an analogous argument, the fact that~$r \in R$ means that in any proper $2$-coloring of~$(G - W_I) - X$ in which~$W_{L,1}$ gets color~$1$ and~$W_{L,2}$ gets color~$2$, the color of~$r$ must be identical to its color under~$c$.

Since~$a$ and~$r$ belong to the same connected subgraph~$H$ of~$(G - W_I) - X$, in any proper $2$-coloring they either both change their color compared to~$c$, or both keep their color compared to~$c$. This is a contradiction to the fact that~$a$ changed color and~$r$ remained of the same color.

($\Leftarrow$) For the converse, consider a set~$X$ that separates~$A$ from~$R$ in~$G - W$. We construct a proper $2$-coloring~$c'$ of~$(G - W_I) - X$ in which~$W_{L,1}$ gets color~$1$ and~$W_{L,2}$ gets color~$2$, as follows. Let~$c'(v \in W_{L,1}) = 1$ and~$c'(v \in W_{L,2}) = 2$. For each connected component of~$(G - W) - X$ that contains a vertex from~$R$, let its coloring under~$c'$ be identical to its coloring under~$c$. For each connected component of~$(G - W) - X$ that contains no vertex from~$R$, let its coloring under~$c'$ be the opposite of its coloring under~$c$. Since~$c$ was a proper coloring, there are no color conflicts among vertices of~$(G - W) - X$. Since both~$W_{L,1}$ and~$W_{L,2}$ are independent sets, there are no color conflicts among~$W_{L,1}$ or among~$W_{L,2}$. It remains to verify that each edge connecting~$W_L$ to a vertex of~$(G - W) - X$ is properly colored. But this follows from our construction: all neighbors of~$W_{L,1}$ with color~$1$ under~$c$ belong to~$A$ and therefore have their coloring swapped to~$2$ in~$c'$; similarly all neighbors of~$W_{L,2}$ with color~$2$ under~$c$ belong to~$A$ and have their coloring swapped to~$1$ in~$c'$. Finally, neighbors of~$W_{L,1}$ with color~$2$ in~$c$ belong to~$R$ and therefore have the same color~$2$ in~$c'$, and neighbors of~$W_{L,2}$ with color~$1$ in~$c$ belong to~$R$ and have the same color~$1$ in~$c'$, ensuring these edges are properly colored as well.
\end{claimproof}

The next two claims show that certain types of OCTs can be computed efficiently in the~$(s(k),2k)$-unbreakable input graph~$G$.

\begin{subclaim}\label{claim:weaklyconsistentocts}
There is an FPT algorithm that outputs a list of partitioned OCTs in~$G$ with the guarantee that for each witness~$X$, there is a partitioned OCT on the list that is weakly consistent with~$X$.
\end{subclaim}
\begin{claimproof}
The algorithm proceeds as follows.

\begin{enumerate}
    \item Initialize an empty list $\mathcal{W}$. Compute a minimum cardinality odd cycle transversal $W \subseteq V(G)$ of size at most $s(k)+k-1$. If no such OCT exists, return the empty list.
    \item For each $y \in V(G)$, let $\mathcal{C}_y$ be the set of connected vertex sets $S$ with $y \in S$, $|S| \leq s(k)$ and $|N(S)| \leq k$. For each partition $P = (W_L,W_I,W_R)$ of $W$, a choice tuple $t_P$ contains an entry for each $y \in W_R$, where entry $t_P[y]$ is some set $C_y \in \mathcal{C}_y$.
    \item For each partition $P = (W_L,W_I,W_R)$ of $W$ and each choice tuple $t_P$, if $(W \setminus W_R) \cup \bigcup_{y \in W_R}N(t_P[y])$ is an OCT, then add $(W_L,W_I \cup \bigcup_{y \in W_R}N(t_P[y]))$ to $\mathcal{W}$.
    \item Return the list $\mathcal{W}$.
\end{enumerate}
We argue the running time of the steps described above. The first step can be done in time $\Oh^*(3^{s(k)+k})$~\cite{DBLP:journals/orl/ReedSV04,DBLP:books/sp/CyganFKLMPPS15}. 
For each $y \in V(G)$, computing $\mathcal{C}_y$ is in FPT by~\ref{lem:enumerate_small_connected_graphs}. Since there are $3^{s(k)+k-1}$ possible partitions $P$ and FPT many choice tuples $t_P$, the running time follows. To see the correctness of the algorithm, first note that each partition in the output is an OCT by construction. All that is left to show is the output guarantee. Consider some witness $X$ and $C$ be the unique large component of $G-X$. Let $P = (W_L,W_I,W_R)$ be the partition such that $W \cap C = W_L$, $W \cap X = W_I$, and $W_R \subseteq V(G) \setminus (X \cup C)$. To see that $|W_L| \leq k$, observe that if this was not the case we would obtain a smaller OCT by taking $(W \setminus W_L) \cup N(C)$, contradicting $W$ has minimum cardinality. For each $y \in W_R$, let $C_y$ be the connected component of $G-X$ containing $y$. Note that $|C_y| \leq s(k)$ by Proposition~\ref{prop:nice} and $|N(C_y)| \leq k$ by Note~\ref{note:smallneighborhood}. Therefore for some choice tuple $t_P$ we have $t_P[y] = C_y$ for each $y \in W_R$. It follows that $(W_L,W_I \cup \bigcup_{y \in W_R}N(t_P[y]))$ that is contained in the list satisfies the output requirement for witness $X$.
\end{claimproof}

\begin{subclaim}\label{claim:stronglyconsistent}
There is an FPT algorithm that, given a partitioned OCT that is weakly consistent with some (unknown) witness~$X$ in~$G$, outputs a list of OCTs in~$G$ such that at least one is strongly consistent with~$X$.
\end{subclaim}
\begin{claimproof}
Let $(W_L,W_I)$ be the given partitioned OCT, where $W_L \cup W_I = W$. If~$|W| > s(k) + k - 1$, then no witness is strongly consistent with~$W$ by Proposition~\ref{prop:nice}, hence we may assume $|W| \leq s(k)+k-1$.
\begin{enumerate}
    \item Initialize an empty list $\mathcal{W}$. For each $y \in V(G)$, let $\mathcal{C}_y$ be the set of connected vertex sets $S$ with $y \in S$, $|S| \leq s(k)$ and $|N(S)| \leq 2k$.
    \item Let $c^*$ be an arbitrary proper 2-coloring of $G-W$ and let $B_i^* = (c^*)^{-1}(i)$ for each $i \in [2]$. 
    \item For each partition $(W_1,W_2)$ of $W_L$, let $B_1 = N(W_2) \setminus W$ and $B_2 = N(W_1) \setminus W$. Let $A = (B_1 \cap B_2^*) \cup (B_2 \cap B_1^*)$ and $R = (B_1 \cap B_1^*) \cup (B_2 \cap B_2^*)$. 
    \item For each choice $Q \in \{A,R\}$ with $|Q| \leq s(k)+k$, for each $D \subseteq Q$ with $|D| \leq k$, choice tuple $t_{Q,D}$ has an entry for each $y \in Q \setminus D$, where entry $t_{Q,D}[y]$ is some vertex set $C_y \in \mathcal{C}_y$. 
    \item For each choice $Q \in \{A,R\}$ with $|Q| \leq s(k)+k$, for each $D \subseteq Q$ with $|D| \leq k$, and for each choice tuple $t_{Q,D}$, add $(W \cup D \cup \bigcup_{y \in Q \setminus D}N(t_{Q,D}[y])) \setminus W_L$ to $\mathcal{W}$ in case it is an OCT.
    \item Return the list $\mathcal{W}$.
\end{enumerate}
The running time follows from Lemma~\ref{lem:enumerate_small_connected_graphs} and the fact that there are FPT many choices for $(W_1,W_2)$, $D$, and tuple $t_{Q,D}$. We argue the correctness of the algorithm.
Note that each set in the output list is an OCT by construction. Consider some witness $X$ with $(W_L,W_I)$ weakly consistent with $X$ and let $C$ be the unique large component of $G-X$, which is bipartite by definition of witness. Let $Y \subseteq X$ be an OCT of $G$ with $W_I \subseteq Y$ and $Y \subseteq W_I \cup N(C)$. Note that such an OCT $Y$ exists as $W' = (W \setminus W_L) \cup N(C)$ is such an OCT. Let $c \colon V(G) \setminus Y \to [2]$ be a proper 2-coloring of $G-Y$. For some partition $(W_1,W_2)$ of $W_L$ we have $W_i \subseteq c^{-1}(i)$ for each $i \in [2]$. Note that since $W \setminus W_L = W_I$, we have that $|W' \setminus W_I| \leq k$.      

By Claim~\ref{claim:oct:separator}, it follows that $Y \setminus W_I \subseteq N(C)$ separates $A$ and $R$ in $G-W$. Note that $B_i \subseteq N[C]$ for each $i \in [2]$ since $W_L \subseteq C$, therefore $A \subseteq N[C]$ and $R \subseteq N[C]$. Observe that $W_L \cup N(C)$ is an $(A,R)$-separator of size at most $2k$ in $G$. Therefore, since $G$ is $(s(k),2k)$-unbreakable, it follows that at least one of the two sides has size at most $s(k)$ after deleting $W_L \cup N(C)$. Let $Q \in \{A,R\}$ be the small side, the algorithm tries this choice as $|Q| \leq s(k)+k$ is satisfied. Let $D = N(C) \cap Q$. 
For each $y \in Q \setminus D$, let $C_y$ be the connected component of $G-(N(C) \cup W_L)$ containing $y$. Note that $|C_y| \leq s(k)$ and $|N(C_y)| \leq 2k$. Let the choice tuple $t_{Q,D}$ be such that $t_{Q,D}[y] = C_y$ for each $y \in Q \setminus D$. Observe that $(D \cup \bigcup_{y \in Q \setminus D}N(t_{Q,D}[y])) \setminus W_L \subseteq N(C)$ is an $(A,R)$-separator in $G-W$. Therefore $(W_I \cup D \cup \bigcup_{y \in Q \setminus D}N(t_{Q,D}[y])) \setminus W_L$ is an OCT by Claim~\ref{claim:oct:separator} contained in $X$, concluding the proof.
\end{claimproof}

With the two claims above, we can solve the problem as follows. Compute a list of partitions $\mathcal{W}$ using Claim~\ref{claim:weaklyconsistentocts} and use each $W \in \mathcal{W}$ as input to Claim~\ref{claim:stronglyconsistent}. Using the output $\mathcal{U}$ of Claim~\ref{claim:stronglyconsistent}, call Lemma~\ref{lem:enumeration_find_witness} for each $U \in \mathcal{U}$. By the output guarantee of the claims, for each witness $X$ we call the lemma with $U \subseteq X$ at some point, thus solving the problem.
\end{proof}

\section{Conclusion}
We have shown that $\hh$-elimination distance and $\hh$-treewidth are non-uniformly fixed-parameter tractable for $\hh$ being the class of bipartite graphs, and whenever~$\hh$ is defined by a finite set of forbidden induced subgraphs. 
An obvious direction for further research is extending this to other graph classes. While the algorithms presented here solve the decision variant of the problem, by self-reduction they can be used to identify a witness if one exists. The main observation driving such a self-reduction is the following: if~$\hhtw(G) \leq k$, then for an arbitrary~$v \in V(G)$ there exists a $\hhtw(G)$-witness that contains~$v$ if and only the graph~$G'$ obtained from~$G$ by inserting a minimal forbidden induced subgraph into~$\hh$ and identifying one of its vertices with~$v$, still satisfies~$\hhtw(G') \leq k$. Hence an iterative process can identify all vertices of a witness in this way.

While we have focused on the established notions of~$\hhtw$ and~$\hhdepth$, the ideas presented here can be generalized using minor-closed graph parameters~$f$ other than treewidth and treedepth. As long as~$f$ can attain arbitrarily large values, implying its value on a clique grows with the size of the clique, and~$\hh$ is characterized by a finite set of forbidden induced subgraphs, we believe our approach can be generalized to answer questions of the form: does~$G$ have an $\hh$-deletion set~$X$ for which~$f(\torso_G(X)) \leq k$? 



\bibliographystyle{plainurl}
\bibliography{refs}

\clearpage

\appendix



\section{Proof of Theorem \ref{thm:blackbox}}
Since we slightly changed the statement of Theorem~\ref{thm:blackbox} compared to its original form, we state its proof as given in the full version by Lokshtanov et al.\cite{DBLP:journals/corr/abs-1802-01453} for completeness. We require the following theorem from their paper.
\begin{theorem}\cite[Theorem 22]{DBLP:conf/icalp/LokshtanovRSZ18}\label{thm:reduce_to_unbreakable}
Let $\varphi$ be a CMSO formula. For all $c \in \mathbb{N}$, there exists $s \in \mathbb{N}$ such that if there exists an algorithm that solves $\textsc{CMSO}[\varphi]$ on $(s,c)$-unbreakable structures in time $\Oh(n^d)$ for some $d > 4$, then there exists an algorithm that solves $\textsc{CMSO}[\varphi]$ on general structures in time $\Oh(n^d)$.
\end{theorem}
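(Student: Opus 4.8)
The plan is to reproduce the \emph{recursive understanding} argument of Lokshtanov et al.~\cite{DBLP:conf/icalp/LokshtanovRSZ18,DBLP:journals/corr/abs-1802-01453} (which itself refines the randomized-contractions framework of Chitnis et al.). I would first isolate the purely logical ingredient. For $t \in \mathbb{N}$, call a structure together with an injective labelling of $t$ of its elements by $[t]$ a \emph{$t$-boundaried structure}, and let $\oplus$ denote the operation that glues two $t$-boundaried structures by identifying equally-labelled boundary elements. For the fixed formula $\varphi$, define the canonical equivalence $\equiv^t_\varphi$ by $B_1 \equiv^t_\varphi B_2$ if and only if $B_1 \oplus H \models \varphi \Leftrightarrow B_2 \oplus H \models \varphi$ for every $t$-boundaried $H$. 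The Feferman--Vaught-style composition theorem for counting MSO (Courcelle, see~\cite{CourcelleEngelfriet2012}) gives that $\equiv^t_\varphi$ has finitely many classes, that whether $B_1 \oplus B_2 \models \varphi$ is a function of the classes of $B_1$ and $B_2$, and that one can compute for each class a \emph{representative} $t$-boundaried structure of size at most some $g(\varphi,t)$. In particular, the class of a $t$-boundaried structure $B$ is determined once we know, for each of the finitely many representatives $R$, whether $B \oplus R \models \varphi$ --- a finite amount of information that (crucially) we will be able to obtain by calls to the assumed oracle.

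Next I would set up the recursion. Choose $s := s(c,\varphi)$ larger than $2g(\varphi,c)$ and large enough for the unbreakability bookkeeping below. On input $G$: if $G$ is $(s,c)$-unbreakable, run the assumed oracle and return its answer. Otherwise the key step is to find, in polynomial time, a vertex set $Z$ with $t := |N(Z)| \le c$, with $|Z| \ge s$, and such that the $t$-boundaried structure $B_Z$ induced by $Z \cup N(Z)$ with boundary $N(Z)$ admits no order-$\le c$ separation splitting $Z$ into two parts each of size $\ge s$ (``$B_Z$ is unbreakable from its own boundary''). Such a $Z$ exists: starting from an arbitrary $(s,c)$-separation $(X,C,Y)$ of $G$ one may, as long as the current piece is still breakable from its boundary, re-separate its heavier side while enlarging its boundary in a controlled way, and a recursion on the shrinking piece terminates with the desired $Z$. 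Making this recursion actually \emph{find} $Z$ in polynomial time, and with the total work over the whole recursion controlled, is the delicate algorithmic part.

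Once $Z$ is found, I would \emph{understand and replace} it. Because $B_Z$ is unbreakable from its boundary and every representative $R$ has at most $g(\varphi,t) < s/2$ elements, the structure $B_Z \oplus R$ is itself $(s',c)$-unbreakable for a suitable $s' = s'(s,g)$ (any order-$\le c$ separation of $B_Z\oplus R$ must leave all but $\le g(\varphi,t)$ elements of $Z$ on one side, contradicting the choice of $Z$); hence for each representative $R$ I can legitimately invoke the oracle on $B_Z \oplus R$, and from the answers read off the $\equiv^t_\varphi$-class of $B_Z$. Let $R^\star$ be the minimum-size representative of that class, and let $G^\star$ be $G$ with $Z$ deleted and $R^\star$ glued back along $N(Z)$; by the composition theorem $G^\star \models \varphi \Leftrightarrow G \models \varphi$, and $|V(G^\star)| \le |V(G)| - |Z| + g(\varphi,t) < |V(G)|$. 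The procedure should be organised so that all these unbreakable pieces are located first, forming a recursion tree whose leaf pieces have total size $\Oh(n)$, and then understood bottom-up, so that the oracle is invoked only on (leaf piece)$\oplus$(representative) structures of total size $\Oh(n)$; the total oracle cost is then $\sum_i \Oh(n_i^d) \le \Oh\big((\sum_i n_i)^d\big) = \Oh(n^d)$, while the separator/flow computations needed to build the decomposition cost $\Oh(n^4\,\mathrm{polylog}\,n)$ in total --- which is $o(n^d)$ for every fixed $d > 4$, and this is exactly where the hypothesis $d>4$ enters. At the leaves the oracle is applicable because those pieces are (boundaried-)unbreakable, so the final bound is $\Oh(n^d)$ on general structures.

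I expect the main obstacle to be precisely the combinatorial-algorithmic core of the second step: proving that a breakable structure always contains a large piece that is unbreakable from its own boundary, and, more importantly, that the whole decomposition into such pieces can be computed in polynomial time with the running time amortised so that the final bound is $\Oh(n^d)$ and not $\Oh(n^{d+1})$. Pinning down the constants is part of the same analysis --- in particular, choosing $s$ as a function of $c$ and $\varphi$ large enough that (a) gluing any representative onto an unbreakable-from-boundary piece still yields an unbreakable structure, and (b) each replacement strictly shrinks the instance. The logical ingredient, by contrast, is classical~\cite{CourcelleEngelfriet2012}, and the understand-and-replace step is routine once the oracle is available on unbreakable instances.
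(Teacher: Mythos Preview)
The paper does not prove this statement at all: Theorem~\ref{thm:reduce_to_unbreakable} is quoted verbatim as \cite[Theorem 22]{DBLP:conf/icalp/LokshtanovRSZ18} and used as a black box; the only proof in the appendix is that of Theorem~\ref{thm:blackbox}, which simply \emph{invokes} Theorem~\ref{thm:reduce_to_unbreakable} for each fixed~$k$ (setting $c=\hat c(k)$ and $\hat s(k)$ to the resulting $s$). So there is nothing in the paper to compare your proposal against.

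That said, your sketch is a faithful high-level outline of the recursive-understanding proof in the cited source: finite index of the canonical CMSO congruence on $t$-boundaried structures via Feferman--Vaught; locating a large piece that is unbreakable relative to its own boundary; ``understanding'' it by calling the oracle on (piece $\oplus$ representative) to determine its equivalence class; replacing it by a small representative; and amortising the decomposition cost at $\Oh(n^4\,\mathrm{polylog}\,n)$, which explains the hypothesis $d>4$. You correctly flag the genuinely hard step --- finding the unbreakable-from-its-boundary piece and controlling the total running time --- rather than glossing over it. If you intend to actually write the proof rather than cite it, the main thing to tighten is the claim that $B_Z\oplus R$ is again $(s',c)$-unbreakable: in the Lokshtanov et al.\ argument the oracle is applied to structures that are only \emph{boundaried}-unbreakable, and the formula is lifted accordingly, so you should either track that variant or verify carefully that gluing a size-$\le g(\varphi,c)$ representative preserves ordinary $(s',c)$-unbreakability for your choice of $s'$.
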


\begin{proof}[Proof of Theorem~\ref{thm:blackbox}]
Let $\hat{c} \colon \mathbb{N}_0 \to \mathbb{N}_0$ and define $\hat{s} \colon \mathbb{N}_0 \to \mathbb{N}_0$ as follows. For all $k \in \mathbb{N}_0$, let $\hat{s}(k)$ be the constant $s$ in Theorem~\ref{thm:reduce_to_unbreakable} and $c = \hat{c}(k)$. Suppose that $\textsc{CMSO}[\hat{\varphi}]$ is FPT on  $(\hat{s}(k),\hat{c}(k))$-unbreakable structures. Then for every fixed $k$ we can solve it in $\Oh(n^d)$ time for some fixed $d > 4$. By Theorem~\ref{thm:reduce_to_unbreakable} it follows that we can solve $\textsc{CMSO}[\hat{\varphi}]$ in $\Oh(n^d)$ time for every fixed $k$ on general structures. Therefore we can solve it in FPT time on general structures.
\end{proof}

\end{document}